\newenvironment{proof}{\paragraph{Proof.}}{\hfill$\square$}
\newtheorem{theorem}{Theorem}
\newtheorem{defined}[theorem]{Definition}
\newtheorem{exa}[theorem]{Example}
\newenvironment{example}{\begin{exa} \rm}{\end{exa}}
\newtheorem{corollary}[theorem]{Corollary}
\newtheorem{exe}{Exercise}
\newtheorem{pro}{Problem}
\newcommand{\bfe}[1]{\begin{bfseries}\emph{#1}\end{bfseries}}
\newcommand{\ES}{\mbox{$\emptyset$}}
\newcommand{\myra}{\mbox{$\:\rightarrow\:$}}
\newcommand{\La}{\mbox{$\:\Leftarrow\:$}}
\newcommand{\Ra}{\mbox{$\:\Rightarrow\:$}}
\newcommand{\sse}{\mbox{$\:\subseteq\:$}}
\newcommand{\fa}{\mbox{$\forall$}}
\newcommand{\te}{\mbox{$\exists$}}
\newcommand{\LL}{\mbox{$\ldots$}}
\newcommand{\NI}{\noindent}
\newcommand{\HB}{\hfill{$\Box$}}
\newcommand{\II}{\vspace{2 mm}}
\newcommand{\szkew}[1]{\relax \setbox0=\hbox{\kern -24pt $\displaystyle#1$\kern 0pt }%
\box0}
{\catcode`\@=11 \global\let\ifjusthvtest@=\iffalse}
\newcounter{oldmycaption}
\newcommand{\turn}{\mathit{turn}}
\newcommand{\leaf}{\mathit{leaf}}
\newcommand{\win}{\mathit{win}}
\newcommand{\spe}{\mathit{SPE}}
\newcommand{\NE}{\mathit{NE}}
\newcommand{\draw}{\mathit{draw}}
\title{Well-Founded Extensive Games with Perfect Information}
\author{Krzysztof R. Apt
	\institute{Centrum Wiskunde \& Informatica\\ Amsterdam, The Netherlands}
	\institute{University of Warsaw\\ Warsaw, Poland}
    \email{k.r.apt@cwi.nl}
	\and
        Sunil Simon
	\institute{Department of CSE,\\ IIT Kanpur, Kanpur, India} 
	\email{simon@cse.iitk.ac.in}
}
\begin{document}

\maketitle

\begin{abstract}
  We consider extensive games with perfect information with
  well-founded game trees and study the problems of existence and of
  characterization of the sets of subgame perfect equilibria in these
  games. We also provide such characterizations for two classes
    of these games in which subgame perfect equilibria exist: two-player
    zero-sum games with, respectively, two and three outcomes.  
\end{abstract}

\section{Introduction}

Research on strategic games assumes that players have to their
disposal infinitely many strategies. This allows one to view strategic
games with mixed strategies as customary strategic games with
infinitely many strategies. This assumption is also used in a study
of various standard examples, such as Cournot or Bertrand competition,
in which the players have to set the production level or the price of
a product.

In contrast, the exposition of the standard results for the extensive
games with perfect information (from now, just `extensive games') is
usually limited to finite games. This restriction rules out a study of
various natural examples, for example infinite variants of the
ultimatum game or some bargaining games, see, e.g., \cite{Rit02}. In
the first case the game has just two stages but the first player may
have infinitely many actions to choose from, while in the second case
the game has an arbitrary, though finite, number of stages. Such games
are then analyzed separately, without taking into account general
results.

The aim of this paper is to provide a systematic account of extensive
games with perfect information in a setting that only requires that
the underlying game tree is well-founded (i.e., has no infinite
paths). We call such games \emph{well-founded}\footnote{Such games are
  sometimes called games with \emph{finite horizon} (see, e.g.,
  \cite{OR94}). We decided to use instead the qualification
  `well-founded' because `finite horizon' is sometimes used to
  indicate that the game tree is of bounded depth, i.e, has a finite
  rank (a concept introduced in the next section).}.

The standard tool to analyze finite extensive games is the concept of
a subgame perfect equilibrium. Their existence is established by means
of the backward induction algorithm.  In infinite well-founded
extensive games subgame perfect equilibria may fail to exist. Also
one cannot resort to any version of this algorithm since it
will not terminate. In principle this could be taken care of by
defining a joint strategy as an eventual outcome of an infinite
computation.  However, for arbitrary well-founded games one would have
then to proceed by means of a transfinite induction, which raises a
legitimate question whether such a process can be called a
computation.

Therefore, instead of trying to define such generalized computations
we dispense with the backward induction altogether and simply proceed
by transfinite induction. This results in mathematical proofs of
existence that are not supported by any algorithm, but still, as
illustrated by examples, the obtained results can be used to compute
the sets of subgame perfect equilibria in specific well-founded games
and to deduce their existence in special cases. Informally,
transfinite induction analyzes the game tree `top down', while the
backward induction proceeds 'bottom up' and consequently cannot be
naturally applied to infinite game trees.

Most results, though not all, are natural generalizations of the
corresponding results for the case of finite extensive games. Some of
these results fail to hold for infinite games and some of the
traditional proofs for finite games, notably the ones involving the
backward induction, have to be suitably modified.  In what follows we
focus both on arbitrary well-founded games and on two-player
  zero-sum games with, respectively, two and three outcomes.  

In the literature we found only one paper in which well-founded games
appear, namely \cite{EO12}. The authors provide using higher-order
computability theory a formula that defines the set of subgame perfect
equilibria under an assumption that implies their existence, and apply
it to determine a subgame perfect equilibrium in an infinite three
stage game.  In several books various examples of infinite extensive
games are studied and various extensions of finite extensive games,
for example games with chance moves, see, e.g., \cite{Rit02},
simultaneous moves, see, e.g., \cite{OR94}, or repeated extensive
games, see \cite{MS06}, are introduced (not to mention games with
imperfect information).  Also subgame perfect equilibria in games
allowing infinite plays have been studied, see, e.g., \cite{FL83} and
a more recent \cite{Kam19}.  In \cite{aFR16} a maximally general
definition of an extensive form game is proposed that among others
covers repeated games, differential games, and stochastic games.  In
the proposed framework even immediate predecessors of an action may
not exist (like in continuous time interactive decisions examples).
In our opinion the class of games considered here merits attention as
a first natural generalization to study.

In the next section we introduce the relevant concepts and provide
natural examples of well-founded extensive games. In Section
\ref{sec:spe} we establish existence of subgame perfect equilibria for
some natural classes of well-founded games and show how to apply a
characterization result to compute the set of subgame perfect
equilibria for specific example games.  Then, in Section
\ref{sec:winorlose} we consider two classes of two-player well-founded
games: win or lose games and chess-like games.  As a stepping stone
towards characterizations of the sets of subgame perfect equilibria in
these games we show that the well-known result attributed to Zermelo
\cite{Zer13} (see also \cite{SW01}) about existence of winning
strategies continues to hold for well-founded games.

\section{Preliminaries on extensive games}

A \bfe{tree} is an acyclic directed connected graph, written as
$(V,E)$, where $V$ is a non-empty set of nodes and $E$ is a possibly
empty set of edges. In drawings the edges will be directed downwards.

An \bfe{extensive game with perfect information} (in short, just
an \bfe{extensive game})  for $n \geq 1$ players consists of:

\begin{itemize}

\item a set of players $\{1, \LL, n\}$,

\item a \bfe{game tree}, which is a tree $T := (V,E)$ with a \bfe{turn
    function} $turn: V \setminus Z \to \{1, \LL, n\}$, where $Z$ is
  the set of leaves of $T$,

\item the \bfe{payoff functions}
$p_i: Z \myra \mathbb{R}$, for each player $i$.
\end{itemize}
We denote it by $(T, turn, p_1, \LL, p_n)$.

The function $turn$ determines at each non-leaf node which player
should move.  The edges of $T$ represent possible \emph{moves} in the
considered game, while for a node $v \in V \setminus Z$ the set of its
children $C(v) := \{w \mid (v,w) \in E\}$ represents possible
\bfe{actions} of player $turn(v)$ at $v$. For a node $u$ in $T$ let
$T^u$ denote the subtree of $T$ rooted at $u$.

We say that an extensive game is \bfe{finite}, \bfe{finite depth},
\bfe{infinite}, or \bfe{well-founded} if, respectively, its game tree
is finite, finite depth, infinite, or well-founded. Recall that a tree
is called \bfe{well-founded} if it has no infinite paths (see, e.g.,
\cite[page 224]{TvD88}).

Further, following \cite{Bat97}, we say that an extensive game is
\bfe{without relevant ties} if for all non-leaf nodes $u$ in $T$
the function $p_i$, where $\turn(u)=i$, is injective on the
  leaves of $T^u$. This is more general than saying that a game is
  \bfe{generic}, which means that each $p_i$ is an injective
  function.
  
We shall often rely on the
concept of a \emph{rank} of a well-founded tree $T$. Recall that it is
defined inductively as follows, where $v$ is the root of $T$:
  \[
\mathit{rank}(T):=
  \begin{cases}
    0 &\text{ if $T$ has one node}\\
    \mathit{sup} \{\mathit{rank}(T^u) + 1 \mid u \in C(v) \} &\text{
      otherwise,}
  \end{cases}
\]
where $sup(X)$ denotes the least ordinal larger than all ordinals in
the set $X$.
Transfinite induction will be needed only to deal with
games on the trees with rank $> \omega$.

In the figures below we identify the actions with the labels we put
on the edges and thus identify each action with the corresponding
move. For convenience we do not assume the labels to be unique, but it
will not lead to confusion.  Further, we annotate the non-leaf nodes
with the identity of the player whose turn it is to move and the name
of the node. Finally, we annotate each leaf node with the
corresponding sequence of the values of the $p_i$ functions.

\begin{example} \label{exa:ultimatum}

  The following two-player game is called the \bfe{Ultimatum game}.
  Player 1 moves first and claims a real number $x \in [0, 100]$, to
  be interpreted as a fraction of some good to be shared, leaving the
  fraction $100-x$ for the other player. Player 2 either
  \emph{accepts} this decision, the outcome is then $(x, 100-x)$, or
  \emph{rejects} it, the outcome is then $(0,0)$. 
  The game tree is depicted in Figure \ref{fig:ultimatum}, where the
  action of player 1 is a number from the set $[0, 100]$, and the
  actions of player 2 are denoted by $A$ and $R$.  The resulting game
  is infinite but is of finite depth. The rank of the game tree is 2.

\begin{figure}[ht]
  \centering
  \tikzstyle{level 1}=[level distance=1.5cm, sibling distance=5cm]
  \tikzstyle{level 2}=[level distance=1.5cm, sibling distance=2.5cm]
  \tikzstyle{level 3}=[level distance=1.5cm, sibling distance=2cm]
\begin{tikzpicture}
 \node (r){1, $u$}
 child{
   node (a){2, $0$}
   child{
     node (d){$(0, 100)$}
     edge from parent
     node[left]{\scriptsize $A$}
   }
   child{
     node(e){$(0,0)$}
     edge from parent
     node[right]{\scriptsize $R$}
     edge from parent
   }
   edge from parent
   node[left]{\scriptsize $0$}
   }
 child{
   node (b){2, $x$}
   child{
     node (f){$(x, 100-x)$}
     edge from parent
     node[left]{\scriptsize $A$}
   }
   child{
     node (g){$(0,0)$}
     edge from parent
     node[right]{\scriptsize $R$}
     edge from parent
   }
   edge from parent
   node[left]{\scriptsize $x$}
 }
  child{
   node (c){2, 100}
   child{
     node (h){$(100, 0)$}
     edge from parent
     node[left]{\scriptsize $A$}
   }
   child{
     node (i){$(0,0)$}
     edge from parent
     node[right]{\scriptsize $R$}
     edge from parent
   }
  edge from parent
  node[right]{\scriptsize $100$}};

  \path (a) -- (b) node [midway] {$\cdots$};
  \path (b) -- (c) node [midway] {$\cdots$};
\end{tikzpicture}
    \caption{The Ultimatum game}
    \label{fig:ultimatum}
  \end{figure}
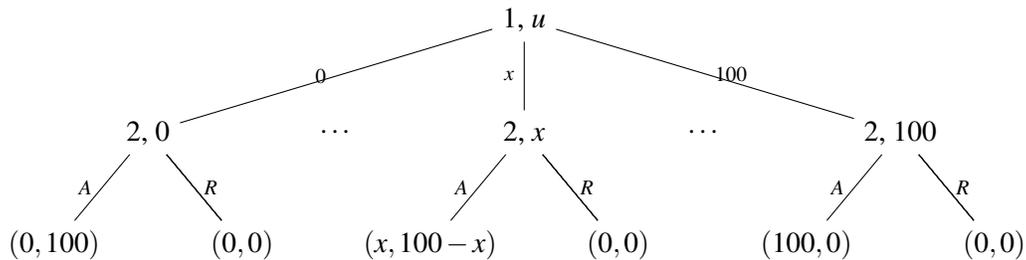
\HB  
\end{example}

\begin{example} \label{exa:well-founded}

  Consider the following \bfe{Bargaining game} (without depreciation).
  Player 1 moves first by selecting a natural number $k \geq 2$. Such
  a choice is to be interpreted that he claims the fraction $1 -
  \frac{1}{k}$ of some good to be shared, leaving the fraction
  $\frac{1}{k}$ for the other player. As long as player 1 selects $k >
  2$, player 2 asks for a better offer or rejects it.  In the first
  case player 1 selects $k-1$. The game continues until player 1
  selects 2, i.e., claims 50 \% of the good.  At that moment player 2
  either accepts this offer, the outcome is then $(50,50)$, or rejects
  it. All rejections result in the outcome $(0,0)$.  The game tree of
  this game, depicted in Figure~\ref{fig:bargain}, has arbitrary long,
  though finite, branches, so this game is infinite but it is
  well-founded.  The rank of the game tree is $\omega$.
  \HB
\end{example}

\begin{figure}[ht]
  \centering
  \tikzstyle{level 1}=[level distance=1.5cm, sibling distance=5cm]
  \tikzstyle{level 2}=[level distance=1.5cm, sibling distance=2.5cm]
  \tikzstyle{level 3}=[level distance=1.5cm, sibling distance=2cm]
  \begin{tikzpicture}
    [
 scale=1.5,font=\footnotesize,
 level 1/.style={level distance=12mm,sibling distance=25mm},
 level 2/.style={level distance=10mm,sibling distance=15mm},
 level 3/.style={level distance=10mm,sibling distance=10mm},
 level 4/.style={level distance=10mm,sibling distance=10mm}
]
 \node (r){1}
 child{
   node (a){$2, 2$}
   child{
     node (d){$(50, 50)$}
     edge from parent
     node[left]{$A$}
   }
   child{
     node(e){$(0,0)$}
     edge from parent
     node[right]{ $R$}
   }
   edge from parent
   node[left]{$2$}
   }
 child{
   node (b){$2, 3$}
   child{
     node (f){$1,B$}
     child{
       node(fo){$2,3 \cdot 2$}
       child{
     node (foo){$(50, 50)$}
     edge from parent
     node[left]{ $A$}
   }
   child{
     node(fot){$(0,0)$}
     edge from parent
     node[right]{ $R$}
   }
       edge from parent
       node[left]{ $2$}
       }
     edge from parent
     node[left]{ $B$}
   }
   child{
     node (g){$(0,0)$}
     edge from parent
     node[right]{ $R$}
     edge from parent
   }
   edge from parent
   node[left]{ $3$}
 }
  child{
    node (c){$2, k$}
  edge from parent
  node[right]{$k$}};
  \node [below =of c] {$\cdots$};
    \node [right =of c] {$\cdots$};
  \path (b) -- (c) node [midway] {$\cdots$};
\end{tikzpicture}
    \caption{The Bargaining  game}
   \label{fig:bargain}
\end{figure}
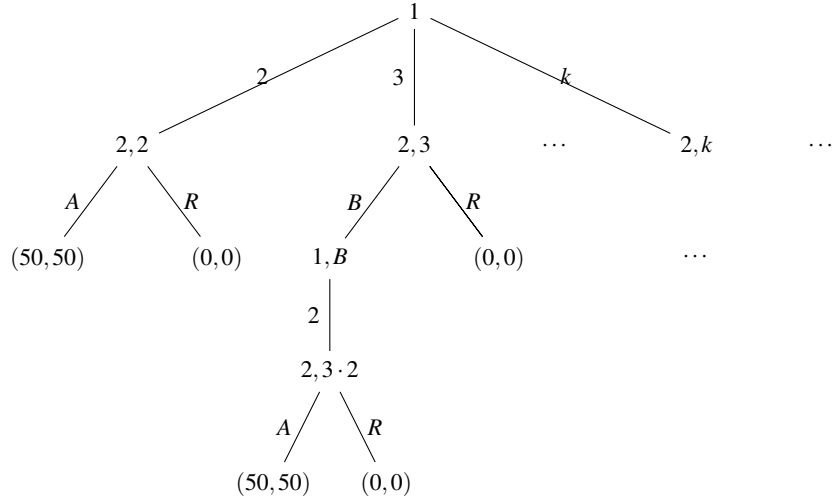

Below, given a two-player extensive game we denote the opponent of
player $i$ by $-i$ instead of $3-i$.

\begin{example} \label{exa:alpha}
  We now construct a sequence of games $G(i, \alpha)$, where
  $i \in \{1, 2\}$ and $\alpha$ is an ordinal $> 1$, by induction as
  follows:

   \begin{itemize}
   \item $G(1,2)$ is the the ultimatum game from Example
     \ref{exa:ultimatum} and $G(2,2)$ is its version with the roles
     of players 1 and 2 reversed;
     
   \item $G(i, \alpha)$, where $\alpha > 2$, is obtained as follows:

     \begin{itemize}

     \item its game tree is constructed by selecting a root $v$, 
taking the game trees of the games $G(-i, \beta)$, 
where $1 < \beta < \alpha$ and selecting their roots as the children of $v$,

     \item setting $turn(v) = i$.

     \end{itemize}

   \end{itemize}

   So for example the root of the game tree of $G(i,3)$ has one child,
   namely the root of the game tree of $G(-i,2)$, the root of the game
   tree of $G(i,4)$ has two children, namely the roots of the game
   trees of $G(-i,3)$ and $G(-i,2)$, etc.  Note that the rank of the
   game tree of $G(i, \alpha)$ is $\alpha$.  \HB
\end{example}

The class of endogenous games studied in \cite{JW05} form another
example of well-founded extensive games. These games are played in two
stages. In the first stage the players are involved in pre-play
negotiations that essentially fix the payoff functions and in the
second stage they choose their strategies. The resulting game is
infinite due to the pre-play negotiations, while the rank of the game
tree is 2.

Note that by K\"onig's lemma \cite{Kon27} every finitely branching
well-founded extensive games is finite.  Consequently, interesting
well-founded extensive games necessarily have infinite branching.

For an extensive game $G:= (T, turn, p_1, \LL, p_n)$ let 
$V_i := \{v \in V \setminus Z \mid turn(v) = i\}$. So $V_i$ is the
set of nodes at which player $i$ moves.  A \bfe{strategy} for player
$i$ is a function $s_i: V_i \to V$, such that $(v, s_i(v)) \in E$ for
all $v \in V_i$.  We denote the set of strategies of player $i$ by
$S_i$.

Let $S = S_1 \times \cdots \times S_n$.  We call each element $s \in
S$ a \bfe{joint strategy}, denote the $i$th element of $s$ by $s_i$,
and abbreviate the sequence $(s_{j})_{j \neq i}$ to $s_{-i}$. We write
$(s'_i, s_{-i})$ to denote the joint strategy in which player $i$'s
strategy is $s'_i$ and for all $j \neq i$, player $j$'s strategy is
$s_j$.
Occasionally we write $(s_i, s_{-i})$ instead of
$s$.  Finally, we abbreviate the Cartesian product
$\times_{j \neq i} S_j$ to $S_{-i}$.
So in the degenerate situation when the game tree consists of just one
node, each strategy is the empty function, denoted by $\ES$, and there
is only one joint strategy, namely the $n$-tuple of these functions.
Each joint strategy assigns a unique descendant to every node in
$V \setminus Z$.  In fact, we can identify joint strategies with such
assignments.  

From now on the above notation will be used in the context of any
considered extensive game $G$. In particular $S_i$ will always denote
the set of strategies of player $i$.

Each joint strategy $s = (s_1, \LL, s_n)$ determines a rooted path
$\mathit{play}(s) := (v_1, \LL, v_m)$ in $T$ defined inductively as follows:

\begin{itemize}
\item $v_1$ is the root of $T$,

\item if $v_{k} \not\in Z$, then $v_{k+1} := s_i(v_k)$, where $turn(v_k) = i$.

\end{itemize}
So when the game tree consists of just one node, $v$, we have
$\mathit{play}(s) = v$.
Informally, given a joint strategy $s$, we can view $\mathit{play}(s)$ as the
resulting \emph{play} of the game.

Suppose now that the extensive game is well-founded.  Then for each
joint strategy $s$ the rooted path $\mathit{play}(s)$ is finite.  Denote by
$\leaf(s)$ the last element of $\mathit{play}(s)$. We call $(p_1(leaf(s)), \LL,
p_n(leaf(s)))$ the \bfe{outcome} of the game $G$ when each player $i$
pursues his strategy $s_i$ and abbreviate it as $p(leaf(s))$. We call
two joint strategies $s$ and $t$ \bfe{payoff equivalent} if
$p(leaf(s)) = p(leaf(t))$.

We say that a strategy $s_i$ of player $i$ a \bfe{best response} to a
joint strategy $s_{-i}$ of his opponents if for all $s'_i \in S_i$,
$p_i(leaf(s)) \geq p_i(leaf(s'_i, s_{-i}))$.
Next, we call a joint strategy $s$ a \bfe{Nash equilibrium} if each
$s_i$ is a best response to $s_{-i}$, that is, if
\[
\fa i \in \{1, \ldots, n\},  \fa s'_i \in S_i, \ p_i(leaf(s_i, s_{-i})) \geq p_i(leaf(s'_i, s_{-i})).
\]

\begin{example} \label{exa:ultimatum1}

  Let us return to the Ultimatum game from Example \ref{exa:ultimatum}.
  Each strategy for player 1 is a number,
  respectively from $[0,  100]$, while each
  strategy for player 2 assigns to every such number $x$ either $A$ or
  $R$.

  It is easy to check that each Nash equilibrium is of the form
  $(100, \ \mathrm{always} \ R)$, with the outcome  $(100, 0)$, 
 or
  $(x,s_2)$ with $s_2(x) = A$ and $s_2(y) = R$ for $y > x$, where
  $x,y \in [0,100]$, with the outcome  $(x, 100-x)$.
  \HB
\end{example}

Finally, we recall the notion of a subgame perfect equilibrium due to
Selten \cite{Sel65}(see also section 6.2 in \cite{OR94}), though now
defined for the larger class of well-founded games.

The \bfe{subgame of $G$ rooted at the node $w$}, denoted by
$G^w$, is defined as follows:

\begin{itemize}

\item its set of players is $\{1, \LL, n\}$,
  
\item its tree is $T^w$,

\item its turn and payoff functions are the restrictions of 
the corresponding functions of $G$ to the nodes of $T^w$.

\end{itemize}
Note that some players may `drop out' in $G^w$, in the sense that at
no node of $T^w$ it is their turn to move.  Still, to keep the
notation simple, it is convenient to admit in $G^w$ all original
players in $G$.

Each strategy $s_i$ of player $i$ in $G$ uniquely determines his
strategy $s^w_i$ in $G^w$.  Given a joint strategy
$s = (s_1, \LL, s_n)$ of $G$ we denote by $s^w$ the joint strategy
$(s^w_1, \LL, s^w_n)$ in $G^w$.  Further, we denote by $S_i^w$ the set
of strategies of player $i$ in the subgame $G^w$ and by $S^w$ the set
of joint strategies in this subgame.

Suppose now the extensive game $G$ is well-founded.  Then the notion
of a Nash equilibrium is well-defined.  A joint strategy $s$ of $G$ is
called a \bfe{subgame perfect equilibrium} in $G$ if for each node $w$
of $T$, the joint strategy $s^w$ of $G^w$ is a Nash equilibrium in
$G^w$.  Informally $s$ is subgame perfect equilibrium in $G$ if it
induces a Nash equilibrium in every subgame of $G$.

\begin{example} \label{exa:ultimatum2}

  Return now to the Ultimatum game from Example
  \ref{exa:ultimatum}. It is easy to check that it has exactly one
  subgame equilibrium, depicted in Figure \ref{fig:ultimatum1} by
  thick lines.

      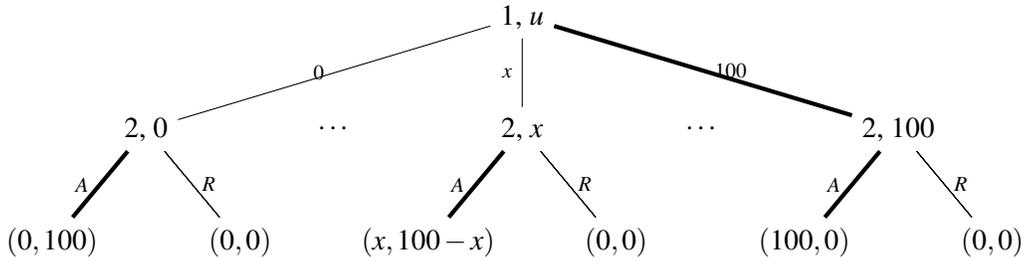
\begin{figure}[ht]
  \centering
  \tikzstyle{level 1}=[level distance=1.5cm, sibling distance=5cm]
  \tikzstyle{level 2}=[level distance=1.5cm, sibling distance=2.5cm]
  \tikzstyle{level 3}=[level distance=1.5cm, sibling distance=2cm]
\begin{tikzpicture}
 \node (r){1, $u$}
 child{
   node (a){2, $0$}
   child{
     node (d){$(0, 100)$}
     edge from parent [ultra thick]
     node[left]{\scriptsize $A$}
   }
   child{
     node(e){$(0,0)$}
     edge from parent
     node[right]{\scriptsize $R$}
     edge from parent
   }
   edge from parent
   node[left]{\scriptsize $0$}
   }
 child{
   node (b){2, $x$}
   child{
     node (f){$(x, 100-x)$}
     edge from parent [ultra thick]
     node[left]{\scriptsize $A$}
   }
   child{
     node (g){$(0,0)$}
     edge from parent [thin]
     node[right]{\scriptsize $R$}
     edge from parent
   }
   edge from parent 
   node[left]{\scriptsize $x$}
 }
  child{
   node (c){2, 100}
   child{
     node (h){$(100, 0)$}
     edge from parent [ultra thick]
     node[left]{\scriptsize $A$}
   }
   child{
     node (i){$(0,0)$}
     edge from parent
     node[right]{\scriptsize $R$}
     edge from parent [thin]
   }
  edge from parent [ultra thick]
  node[right]{\scriptsize $100$}};

  \path (a) -- (b) node [midway] {$\cdots$};
  \path (b) -- (c) node [midway] {$\cdots$};
\end{tikzpicture}
    \caption{The subgame perfect equilibrium in the Ultimatum game}
    \label{fig:ultimatum1}
      \end{figure}

Note that even though the rank of the game tree is 2, the customary 
backward induction cannot be applied here to compute 
this subgame equilibrium. Indeed, to deal with the root node the 
algorithm has to deal first with infinitely many nodes at which player 2 moves,
which leads to divergence.
\HB
\end{example}

An analysis of the subgame perfect equilibria in the games
from Examples \ref{exa:well-founded} and \ref{exa:alpha} is more
involved and will be provided in the next section using
a characterization of the set of subgame perfect equilibria of a
well-founded extensive game.

\section{Subgame perfect equilibria in well-founded games}
\label{sec:spe}
In general subgame perfect equilibria may not exist in well-founded
games. As an example take the modification of the Ultimatum game from
Example \ref{exa:ultimatum} in which instead of $[0, 100]$ one
considers the open interval $(0, 100)$.  In this section we establish
existence of subgame perfect equilibria in some natural classes of
well-founded games. This will directly follow from a characterization
of the sets of subgame perfect equilibria in such games.

We begin by stating a preparatory lemma, called the `one deviation
property' in \cite{OR94}. To keep the paper self-contained we include
in the Appendix the proof. It is more detailed than the one given in
\cite{OR94}.

\begin{restatable}{lemma}{lmSPE}
  \label{lm:SPE}
  Let $G$ be a well-founded extensive game over the game tree $T$. A
  joint strategy $s$ is a subgame perfect equilibrium in $G$ iff
  for all non-leaf nodes $u$ in $T$ and all $y \in C(u)$
 
  \begin{itemize}
  \item $p_i(\leaf(s^x)) \geq p_i(\leaf(s^{y}))$, where $i=\turn(u)$
    and $s_i(u)=x$.
  \end{itemize}
\end{restatable}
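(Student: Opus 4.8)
The plan is to prove the two implications separately. The direction from subgame perfect equilibrium to the one-deviation condition is a direct specialization of the Nash equilibrium property, while the converse carries the content and proceeds by transfinite induction on the rank of the subtree.

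For ($\Rightarrow$), assume $s$ is a subgame perfect equilibrium and fix a non-leaf node $u$ with $i = \turn(u)$ and $x = s_i(u)$, together with an arbitrary $y \in C(u)$. I would pass to the subgame $G^u$, where $s^u$ is a Nash equilibrium by assumption, and consider the strategy $s'_i$ that agrees with $s_i$ everywhere except that $s'_i(u) = y$. Since $u \notin T^y$, the play of $(s'_i, s^u_{-i})$ moves from $u$ to $y$ and thereafter follows $s$, so $\leaf((s'_i, s^u_{-i})) = \leaf(s^y)$, whereas $\leaf(s^u) = \leaf(s^x)$. Applying the best-response inequality for player $i$ in the Nash equilibrium $s^u$ then gives exactly $p_i(\leaf(s^x)) \geq p_i(\leaf(s^y))$, the claimed condition.

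For ($\Leftarrow$), assume the one-deviation inequality at every non-leaf node. Fixing a player $i$, it is enough to establish that for every node $w$ and every $s'_i \in S_i^w$ one has $p_i(\leaf(s^w)) \geq p_i(\leaf((s'_i, s^w_{-i})))$; quantifying over all $i$ and $w$ then says precisely that $s^w$ is a Nash equilibrium in each subgame $G^w$, i.e., that $s$ is a subgame perfect equilibrium. I would prove this inequality by transfinite induction on $\mathit{rank}(T^w)$, which is legitimate because every child of $w$ has strictly smaller rank, so the induction hypothesis is available at all children. The base case ($w$ a leaf) is immediate. For a non-leaf $w$ put $j = \turn(w)$ and $x = s_j(w)$, and distinguish two cases. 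If $j \neq i$, then player $i$'s deviation does not affect the move at $w$, so both $s^w$ and $(s'_i, s^w_{-i})$ enter $x$; restricting both joint strategies to $T^x$ and invoking the induction hypothesis at $x$ yields the claim. If $j = i$, let $y = s'_i(w)$; the induction hypothesis at $y$ gives $p_i(\leaf((s'_i, s^w_{-i}))) \leq p_i(\leaf(s^y))$, while the one-deviation hypothesis at $w$ gives $p_i(\leaf(s^y)) \leq p_i(\leaf(s^x)) = p_i(\leaf(s^w))$; chaining the two inequalities closes the case.

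I expect the main difficulty to be bookkeeping rather than any conceptual step. A deviating strategy $s'_i$ may differ from $s_i$ at arbitrarily many nodes, so the argument must cleanly decompose such a global deviation into its action at the root $w$ plus a residual deviation on the subtree entered, and then combine the single-node hypothesis with the induction hypothesis. The points requiring care are the restriction identities $\leaf(s^w) = \leaf(s^x)$ and $\leaf((s'_i, s^w_{-i})) = \leaf(((s'_i)^z, s^z_{-i}))$ for the child $z$ actually entered, together with the observation that well-foundedness makes every $\mathit{play}$ finite, so that $\leaf$ is defined throughout. Because children always strictly decrease the rank, the induction treats successor and limit ordinals uniformly, with no separate limit-stage analysis.
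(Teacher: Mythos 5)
Your proof is correct, and the forward direction coincides with the paper's (both construct the one-point modification of $s_i$ at $u$ and invoke the Nash property of $s^u$ in $G^u$). The backward direction, however, takes a genuinely different route in its inductive step. The paper inducts on the rank of the whole game tree, uses the induction hypothesis to conclude that every proper subgame is already in equilibrium, and then argues by contradiction: given a profitable deviation $t_i$, it locates the first node $u$ on the play where $s$ and $t=(t_i,s_{-i})$ diverge, and then (in the case $s^y \neq t^y$) a further first divergence node $w$ inside $T^y$, chaining the one-deviation inequality at $u$ with a single application of the Nash property of $s^w$. You instead prove the best-response inequality $p_i(\leaf(s^w)) \geq p_i(\leaf(s'_i, s^w_{-i}))$ directly for every node $w$ by transfinite induction on $\mathit{rank}(T^w)$, peeling off one level at a time and splitting on whether the deviator moves at $w$. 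Both arguments are sound, but yours is arguably cleaner: it avoids the proof by contradiction and the delicate bookkeeping around ``first divergence'' nodes (in the paper's Case 1 the node $w$ must in fact be the first disagreement \emph{on the common play} of $s^y$ and $t^y$, not merely the first such node in $T^y$, for the equalities $\leaf(s^y)=\leaf(s^w)$ and $\leaf(t^y)=\leaf(t^w)$ to hold), handling an arbitrary multi-node deviation uniformly through the recursion. The paper's version, on the other hand, makes visible that only two applications of optimality are ever needed to refute a deviation, which is closer in spirit to the classical finite-horizon argument.
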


\begin{corollary} \label{cor:iff}
  \label{cor:spe}
Let $G$ be a well-founded extensive game over the game tree $T$ with the
root $v$. A joint strategy $s$ is a subgame perfect equilibrium in $G$
iff for all $u \in C(v)$

\begin{itemize}
  \item $p_i(\leaf(s^w)) \geq p_i(\leaf(s^u))$, where $i=\turn(v)$ and $s_i(v)=w$,

  \item $s^{u}$ is a subgame perfect equilibrium in the subgame $G^{u}$.
  \end{itemize}
\end{corollary}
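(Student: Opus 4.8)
The plan is to derive the corollary directly from Lemma~\ref{lm:SPE} by splitting the set of non-leaf nodes of $T$ into the root $v$ on one hand and the non-leaf nodes lying in the subtrees $T^u$, $u \in C(v)$, on the other. First I would note that, since $G$ is well-founded, each subgame $G^u$ is again a well-founded extensive game, so Lemma~\ref{lm:SPE} is available both for $G$ and for every $G^u$.

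By Lemma~\ref{lm:SPE}, $s$ is a subgame perfect equilibrium in $G$ iff the one-deviation inequality $p_i(\leaf(s^x)) \geq p_i(\leaf(s^y))$, with $i = \turn(u')$ and $s_i(u') = x$, holds at every non-leaf node $u'$ of $T$ and every $y \in C(u')$ (I rename the node of the lemma to $u'$ to avoid a clash with the children $u \in C(v)$). The non-leaf nodes of $T$ are exactly $v$ together with the non-leaf nodes of the pairwise disjoint subtrees $T^u$, $u \in C(v)$. The instance at $u' = v$, taking $x = w$ and letting $y$ range over $C(v)$, is precisely the first bullet of the corollary. It therefore remains to show that the conjunction of the remaining instances—those at the non-leaf nodes lying inside the subtrees—is equivalent to the second bullet, namely that $s^u$ is a subgame perfect equilibrium in $G^u$ for every $u \in C(v)$.

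To establish this equivalence I would fix $u \in C(v)$ and apply Lemma~\ref{lm:SPE} to $G^u$: the strategy $s^u$ is a subgame perfect equilibrium in $G^u$ iff the one-deviation inequality, read inside $G^u$, holds at every non-leaf node of $T^u$. The key verification is that, for a node $x$ of $T^u$, this inequality read inside $G^u$ coincides with the inequality read inside $G$. This rests on three routine nesting facts: the turn and payoff functions of $G^u$ are restrictions of those of $G$, so $\turn$ and the relevant payoff values agree on $T^u$; the induced strategy satisfies $s^u_i(x) = s_i(x)$ and, more generally, $(s^u)^x = s^x$ because $(G^u)^x = G^x$; and consequently $\leaf((s^u)^x) = \leaf(s^x)$, so the two sides of the inequality carry identical payoffs. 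Granting these identifications, the instances of the one-deviation property at the non-leaf nodes of $T^u$ are literally the same whether evaluated in $G$ or in $G^u$.

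I expect the only genuine work to be the bookkeeping in the last paragraph: checking that subgames, induced strategies, and the resulting leaves all nest compatibly, so that the instances of Lemma~\ref{lm:SPE} for $G$ restricted to $T^u$ are identical to those for $G^u$. Once this is in place, ranging over all $u \in C(v)$ turns the residual instances of the one-deviation property into the statement that each $s^u$ is a subgame perfect equilibrium in $G^u$, and combining with the instance at $v$ yields the corollary in both directions at once.
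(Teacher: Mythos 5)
Your proposal is correct and follows essentially the same route as the paper: both proofs reduce the corollary to Lemma~\ref{lm:SPE}, split the non-leaf nodes of $T$ into the root $v$ (yielding the first bullet) and the nodes of the subtrees $T^u$, and use the nesting identity $(s^u)^x = s^x$ to identify the one-deviation instances inside $T^u$ with those of $G^u$. The only cosmetic difference is that the paper explicitly notes the vacuous case $C(v)=\emptyset$, which your argument covers implicitly.
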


Intuitively, the first condition states that among the subgames rooted
at the children of the root $v$, the one determined by the first move
in the game $G$ yields the maximal outcome for the player who moved.
Recall that for a function $f: X \to Y$ (with $X$ non-empty), $\mathrm{argmax}_{x \in X} f(x) := \{y \in X \mid f(y) = \max_{x \in X} f(x)\}$.
Using this notation this condition can be reformulated as: $s_i(v) \in \mathrm{argmax}_{u \in C(v)} p_i(\leaf(s^u))$, where $i=\turn(v)$.

\begin{proof}
If $C(v)=\emptyset$, the claim is vacuously true. Otherwise consider any
$u \in C(v)$.  By Lemma \ref{lm:SPE} $s^u$ is a subgame perfect
equilibrium in $G^u$ iff for all
non-leaf nodes $y$ in $T^u$ and $z \in C(y)$,
$
  p_i(\leaf((s^{u})^{x}) \geq p_i(\leaf((s^{u})^{z}),
$
where $i=\turn(y)$ and $s^u_i(y)=x$.

Since $(s^{u})^{x}=s^{x}$ and $(s^{u})^{z}=s^{z}$, the last statement is
equivalent to the statement that the inequality in Lemma \ref{lm:SPE}
holds for all non-leaf nodes $y$ in $T^u$ and $z \in C(y)$.
The conclusion now follows by Lemma \ref{lm:SPE}.
\end{proof}
\II

The above corollary allows us to characterize inductively the set of
subgame perfect equilibria in each well-founded extensive game.

Consider a well-founded extensive game $G$ with the root $v$ and
suppose $C(v) \neq \emptyset$.  Consider the subgames $G^{w}$, where
$w \in C(v)$, and a function $f$ that assigns to each sequence $t$ of
joint strategies in these subgames a child of $v$. Then each pair of
$t$ and $f$ determines a joint strategy in $G$ that we denote by
$(f,t)$.

Recall that by $S^w$ we denote the set of joint strategies in the subgame
$G^w$.  Given subsets $U^w$ of $S^w$ for $w \in C(v)$ and a set of
functions $F$ from $\times_{w \in C(v)} U^{w}$ to $C(v)$, we denote by
$[F, \times_{w \in C(v)} U^{w}]$ the set of joint strategies in $G$
defined by
\[
[F, \times_{w \in C(v)} U^{w}] := \begin{cases}
  \{(\ES, \LL, \ES)\} &\text{ if } C(v) = \emptyset \\
  \{(f, t)
  \mid f \in F, \: t \in \times_{w \in C(v)} U^{w} \} &\text{ otherwise }
\end{cases}
\]
In the first case $(\ES, \LL, \ES)$ stands for the joint strategy that
consists of the $n$-tuple of the empty strategies.
Note that when $C(v) \neq \emptyset$ if any of the sets $U^{w}$ or $F$
is empty, then so is $[F, \times_{w \in C(v)} U^{w}]$.
Further, we denote the set of subgame perfect equilibria in $G$ by
$\spe(G)$.

\vspace{-1mm}
\begin{theorem} \label{thm:rank}
  
  Consider a well-founded extensive game $G$ with the root $v$ and let $i=\turn(v)$.  Then
  \[
    \spe(G) = [F, \times_{w \in C(v)} \spe(G^{w})],
  \]
where if $C(v) \neq \ES$ then
$
  F = \{f \mid \forall t \in
  \times_{w \in C(v)} \: \spe(G^{w}) \ f(t) \in \mathrm{argmax}_{w \in
    C(v)} p_i(\leaf(t^w)) \}
$.
\end{theorem}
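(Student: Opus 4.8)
The plan is to read the whole statement as a repackaging of Corollary~\ref{cor:iff} into the bracket notation, so that essentially no new game-theoretic content is needed beyond that corollary. Recall that the corollary says that $s \in \spe(G)$ iff, for $i = \turn(v)$, the induced strategy $s^u$ is a subgame perfect equilibrium in $G^u$ for every $u \in C(v)$ and $s_i(v) \in \mathrm{argmax}_{u \in C(v)} p_i(\leaf(s^u))$. My task is therefore to check that the set $[F, \times_{w \in C(v)} \spe(G^w)]$ contains exactly those joint strategies. First I would dispose of the degenerate case $C(v) = \ES$: then $v$ is a leaf, the only joint strategy is $(\ES, \LL, \ES)$, it is vacuously a subgame perfect equilibrium, and by definition $[F, \times_{w \in C(v)} \spe(G^w)] = \{(\ES, \LL, \ES)\}$, so both sides coincide.

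For the main case $C(v) \neq \ES$, I would first make precise the correspondence underlying the notation $(f,t)$: a joint strategy $s$ of $G$ is completely determined by its root move $s_i(v) \in C(v)$ together with the tuple $(s^w)_{w \in C(v)}$ of induced subgame strategies, and conversely any such pair determines a unique $s$. This is exactly the datum $(f,t)$ with $t = (s^w)_{w \in C(v)}$ and $f(t) = s_i(v)$, and note that $\leaf(s^w) = \leaf(t^w)$ for every $w$. For the inclusion $[F, \times_{w \in C(v)} \spe(G^w)] \subseteq \spe(G)$ I would take $s = (f,t)$ with $f \in F$ and each $t^w \in \spe(G^w)$; then $s^w = t^w \in \spe(G^w)$ settles the second condition of the corollary, while $s_i(v) = f(t) \in \mathrm{argmax}_{w \in C(v)} p_i(\leaf(t^w)) = \mathrm{argmax}_{w \in C(v)} p_i(\leaf(s^w))$ settles the first, so $s \in \spe(G)$.

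For the reverse inclusion I would take $s \in \spe(G)$, set $t^w := s^w$, and use the corollary to conclude $t^w \in \spe(G^w)$ for each $w$ and $s_i(v) \in \mathrm{argmax}_{w \in C(v)} p_i(\leaf(t^w))$; it then remains to exhibit a function $f \in F$ with $f(t) = s_i(v)$, which realizes $s = (f,t)$. This last step is the main obstacle. Pinning $f$ down on the single tuple $t$ is immediate, but membership in $F$ demands that $f$ select an element of $\mathrm{argmax}_{w \in C(v)} p_i(\leaf(t'^w))$ for \emph{every} tuple $t' \in \times_{w \in C(v)} \spe(G^w)$ simultaneously. Since $C(v)$ may be infinitely branching, these suprema need not be attained, so some of the argmax sets can be empty, and the global quantifier over $t'$ in the definition of $F$ must be reconciled with the pointwise requirement $f(t) = s_i(v)$ coming from the fixed equilibrium $s$. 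I would handle this by invoking a choice function over the nonempty argmax sets and by using the convention, recorded just before the theorem, that $[F, \times_{w \in C(v)} \spe(G^w)] = \ES$ as soon as $F = \ES$, so that the emptiness of $F$ is matched against the failure of an equilibrium to exist; making this matching airtight, rather than the two routine inclusions, is the delicate part of the argument.
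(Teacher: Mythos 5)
Your skeleton --- the degenerate case $C(v)=\ES$, the identification of a joint strategy $s$ with the pair $(f,t)$ where $t=(s^w)_{w\in C(v)}$ and $f(t)=s_i(v)$, and the two inclusions extracted from Corollary~\ref{cor:iff} --- is exactly the paper's route, and your proof of the inclusion $[F, \times_{w \in C(v)} \spe(G^{w})] \subseteq \spe(G)$ is correct (more explicit, in fact, than the paper's). The genuine gap is the step you defer as ``the delicate part'': it is not delicate, it is impossible, because the matching you propose --- that $F=\ES$ can only happen when $\spe(G)=\ES$ --- is false. Concretely, take $\turn(v)=1$, $C(v)=\{w_1,w_2,\LL\}$, and at each $w_k$ let player~2 choose between two leaves with payoffs $(0,1)$ and $(\frac{k}{k+1},1)$. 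Player~2 is indifferent at every $w_k$, so $\spe(G^{w_k})$ contains both of his choices. The profile in which player~2 picks the $(0,1)$-leaf at every $w_k$ and player~1 plays $w_1$ is a subgame perfect equilibrium (all root alternatives then yield player~1 the value $0$), so $\spe(G)\neq\ES$. Yet for the tuple $t'$ in which player~2 picks the $(\frac{k}{k+1},1)$-leaf at every $w_k$, the set $\mathrm{argmax}_{w \in C(v)} p_1(\leaf(t'^{w}))$ is empty (the supremum $1$ is not attained), so no total function $f$ can satisfy the defining condition of $F$; hence $F=\ES$ and the right-hand side of the theorem is empty. The forward inclusion $\spe(G) \subseteq [F, \times_{w \in C(v)} \spe(G^{w})]$, which your choice-function construction was meant to deliver, therefore fails outright, and no amount of care with choice functions can repair it.

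To be fair to you, this is a defect of the statement as literally written, not of your reading of it: the paper's own proof silently establishes only the pointwise fact that every $s \in \spe(G)$ has the form $(f,t)$ with each $t^w \in \spe(G^w)$ and $f(t) \in \mathrm{argmax}_{w \in C(v)} p_i(\leaf(t^w))$ \emph{at that single tuple} $t$, and never confronts the universal quantifier over all tuples built into $F$ --- precisely the point where you got stuck. Two repairs are available. Either read the right-hand side pointwise, i.e.\ as $\{(f,t) \mid \forall w \in C(v) \ t^w \in \spe(G^{w}) \mbox{ and } f(t) \in \mathrm{argmax}_{w \in C(v)} p_i(\leaf(t^w))\}$, in which case your two inclusions close the argument verbatim; or keep $F$ as defined and add the hypothesis that all elements of each $\spe(G^{w})$ are payoff equivalent (so that all tuples induce the same argmax set), which holds in every application the paper makes of the theorem, since there the sets $\spe(G^{w})$ are singletons or payoff equivalent. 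So: your easy direction is right, your diagnosis of the obstacle is exactly right, but the resolution you sketch cannot be completed, and the honest conclusion is that the theorem needs one of these amendments.
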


In particular, if the set
$\mathrm{argmax}_{w \in C(v)} p_i(\leaf(t^w))$ is empty, then
$F = \ES$ and hence $\spe(G) = \ES$.  Intuitively, each function
$f \in F$, given a sequence of subgame perfect equilibria in the
subgames rooted at the children of the root $v$, selects a root of
the subgame in which the outcome in the equilibrium is maximal for the
player who moves at $v$.  \II

\NI \textbf{Proof of Theorem \ref{thm:rank}.}  If $C(v) = \emptyset$,
then $(\ES, \LL, \ES)$ is a unique subgame perfect equilibrium, so the
claim holds.

If $C(v) \neq \emptyset$, then by Corollary
\ref{cor:iff} every subgame perfect equilibrium in the game $G$ is of
the form $(f,t)$, where for all $w \in C(v)$, $t^w$ is a subgame
perfect equilibrium in $G^{w}$ and for some $w \in C(v)$ we have
$f(t) = w$ and $p_i(\leaf(s^w)) \geq p_i(\leaf(s^u))$ for all
$u \in C(v)$.
\HB

\begin{corollary} \label{cor:finite}
  Every well-founded extensive game with finitely many outcomes has a subgame
  perfect equilibrium.
\end{corollary}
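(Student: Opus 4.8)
The plan is to prove the statement by transfinite induction on $\mathit{rank}(T)$, using the recursive description of $\spe(G)$ given by Theorem \ref{thm:rank}. The single fact driving the argument is that a nonempty \emph{finite} set of reals always attains its maximum; the hypothesis of finitely many outcomes is there precisely to supply this, and it is exactly what can fail for arbitrary well-founded games.

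First I would note that having finitely many outcomes is inherited by subgames: the outcomes appearing in any $G^w$ form a subset of those appearing in $G$. Since each such $G^w$ (for $w \in C(v)$) also has strictly smaller rank than $G$, this sets up the induction cleanly. For the base case $\mathit{rank}(T) = 0$ the tree is a single node, $C(v) = \ES$, and Theorem \ref{thm:rank} gives $\spe(G) = \{(\ES, \LL, \ES)\} \neq \ES$.

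For the inductive step I would assume $\mathit{rank}(T) > 0$, so $C(v) \neq \ES$, and assume the result for all games of smaller rank. By the induction hypothesis $\spe(G^w) \neq \ES$ for every $w \in C(v)$, so the product $\times_{w \in C(v)} \spe(G^w)$ is nonempty (invoking choice when $C(v)$ is infinite). It then remains to show that the set $F$ of Theorem \ref{thm:rank} is nonempty. Fixing any $t \in \times_{w \in C(v)} \spe(G^w)$ and writing $i = \turn(v)$, the values $p_i(\leaf(t^w))$ for $w \in C(v)$ all lie among the finitely many payoffs player $i$ obtains at the outcomes of $G$; thus $\{p_i(\leaf(t^w)) \mid w \in C(v)\}$ is a finite and, since $C(v) \neq \ES$, nonempty set of reals, which therefore attains a maximum. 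Hence $\mathrm{argmax}_{w \in C(v)} p_i(\leaf(t^w)) \neq \ES$, and selecting for each such $t$ an element of this argmax produces a function $f \in F$, so $\spe(G) = [F, \times_{w \in C(v)} \spe(G^w)] \neq \ES$.

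I expect the only real obstacle to be the infinite branching at $v$: without finiteness the supremum of the payoffs over the infinitely many children need not be realized, and then $F$ can be empty — exactly what occurs for the open-interval variant of the Ultimatum game noted above, whose set of subgame perfect equilibria is empty. The finitely-many-outcomes assumption collapses the relevant payoff set to a finite one and removes this obstruction, after which the conclusion is a direct reading of Theorem \ref{thm:rank}.
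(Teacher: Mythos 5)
Your proposal is correct and follows essentially the same route as the paper: transfinite induction on the rank of the game tree via Theorem \ref{thm:rank}, with the key observation that a function with finite range has a non-empty argmax. You merely spell out details (heredity of the finitely-many-outcomes property, non-emptiness of the product, the role of choice) that the paper's one-line proof leaves implicit.
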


\begin{proof}
  The claim follows from Theorem \ref{thm:rank} by induction on the
  rank of the game tree and the observation that
  for every function $g: X \to Y$ with a finite range the set
$\mathrm{argmax}_{x \in X} g(x)$ is non-empty.
\end{proof}
\II

The above result can be generalized to some games with infinitely many
outcomes.  An example is a game in which for each player the set of
outcomes is either finite or equals the set of negative integers. More
generally, consider a well-founded extensive game in which for each
player the set of outcomes is a \emph{reverse well-ordered set}, i.e.,
every subset of this set has a greatest element.
Then Theorem \ref{thm:rank} implies that the game has
a subgame perfect equilibrium.

\begin{corollary} \label{cor:generic}
  Every well-founded extensive game without relevant ties has at most
  one subgame perfect equilibrium.
\end{corollary}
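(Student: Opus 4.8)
The plan is to argue by transfinite induction on the rank of the game tree $T$, using the characterization $\spe(G) = [F, \times_{w \in C(v)} \spe(G^{w})]$ from Theorem~\ref{thm:rank}. Before starting the induction I would record that being without relevant ties passes to subgames: every non-leaf node $u$ of $T^{w}$ is a non-leaf node of $T$ with $T^{u} \subseteq T^{w}$, so the required injectivity of $p_{\turn(u)}$ on the leaves of $T^{u}$ is inherited by restriction. This is what lets me invoke the induction hypothesis on the subgames $G^{w}$ rooted at the children of the root $v$.

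The base case is immediate: when $C(v) = \ES$ (rank $0$), Theorem~\ref{thm:rank} gives the single equilibrium $(\ES, \LL, \ES)$. For the inductive step, suppose $C(v) \neq \ES$ and that each $G^{w}$ with $w \in C(v)$ --- all of strictly smaller rank --- has at most one subgame perfect equilibrium. If some $\spe(G^{w}) = \ES$, then $\times_{w \in C(v)} \spe(G^{w}) = \ES$ and so $\spe(G) = \ES$, and we are done. Otherwise each $\spe(G^{w})$ is a singleton, and hence so is the product $\times_{w \in C(v)} \spe(G^{w})$, say $\{t\}$. The crucial observation is then that every $f \in F$ has this one-point set as its domain, so $f$ is completely determined by its single value $f(t) \in \mathrm{argmax}_{w \in C(v)} p_i(\leaf(t^{w}))$, where $i = \turn(v)$.

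The step I expect to be the heart of the matter is showing this argmax is a singleton, for which the hypothesis is exactly tailored. Since the subtrees $T^{w}$ for distinct $w \in C(v)$ are pairwise disjoint, the leaves $\leaf(t^{w})$ are pairwise distinct leaves of $T = T^{v}$; as $v$ is a non-leaf node and the game is without relevant ties, $p_i$ is injective on the leaves of $T^{v}$, so the numbers $p_i(\leaf(t^{w}))$ are pairwise distinct and the argmax has at most one element. Consequently $F$ has at most one element --- empty if the maximum is not attained, otherwise the singleton containing the function that sends $t$ to the unique maximizer --- and therefore $[F, \times_{w \in C(v)} \spe(G^{w})]$ has at most one element, closing the induction. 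The one place to stay alert is the infinite-branching case: injectivity alone does not guarantee that the supremum over $C(v)$ is attained, so the argmax may be empty; but this only makes $\spe(G)$ empty, which is harmless since the claim asserts \emph{at most} one equilibrium rather than existence.
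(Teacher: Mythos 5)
Your proposal is correct and follows essentially the same route as the paper's proof: transfinite induction on the rank, the observation that the no-relevant-ties property is inherited by subgames, and the key step that the singleton product $\{t\}$ yields pairwise distinct leaves $\leaf(t^{w})$, so injectivity of $p_{\turn(v)}$ forces $\mathrm{argmax}_{w \in C(v)} p_i(\leaf(t^{w}))$, hence $F$ and $\spe(G)$, to have at most one element. Your explicit remark that the argmax may be empty under infinite branching is a helpful clarification the paper leaves implicit.
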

\begin{proof}
  If a game is without relevant ties, then so is every subgame of
  it. This allows us to proceed by induction on the rank of the game
  tree.  For game trees of rank 0 the claim clearly holds.  Suppose
  that it holds for all well-founded extensive games without relevant
  ties with the game trees of rank smaller than some ordinal
  $\alpha > 0$. Consider such a game with game tree of rank
  $\alpha$ and rooted at $v$. Let $i = \turn(v)$.

  By the induction hypothesis for each $w \in C(v)$ the set
  $\spe(G^{w})$ has at most one element.  If one of these sets is
  empty, then so is $\spe(G)$.

  So suppose that each $\spe(G^{w})$ is a singleton set. Then so is
  $\times_{w \in C(v)} \spe(G^{w})$. Let
  $\times_{w \in C(v)} \spe(G^{w}) = \{t\}$.  Then for different
  $w, w' \in C(v)$, $\leaf(t^w)$ and $\leaf(t^{w'})$ are different
  leaves of the game tree of $G$, so by the assumption about the game
  $p_i(\leaf(t^w)) \neq p_i(\leaf(t^{w'}))$, since $i = turn(v)$.

  This means that the function $g: C(v) \to \mathbb{R}$ defined by
  $g(w) := p_i(\leaf(t^w))$ is injective. Consequently the set
  $\mathrm{argmax}_{w \in C(v)} p_i(\leaf(t^w))$ has at most one
  element and hence the same successively holds for the sets $F$
  and $\spe(G)$.
\end{proof}
\II

In particular, every generic well-founded extensive game with finitely
many outcomes has a unique subgame perfect equilibrium.
We now show how Theorem \ref{thm:rank} can be used to reason about
subgame perfect equilibria in specific extensive games.

\begin{example}
  
  Consider the Bargaining game $G$ from Example
  \ref{exa:well-founded}.  Denote by $G(k)$ the game in which player 1
  first selects the number $k$.  The inductive structure of these
  games is depicted in Figure \ref{fig:well-founded}, where the
  actions of player 2 are $B$ (`make a better offer') or $A$ and $R$,
  as in Example \ref{exa:ultimatum}.

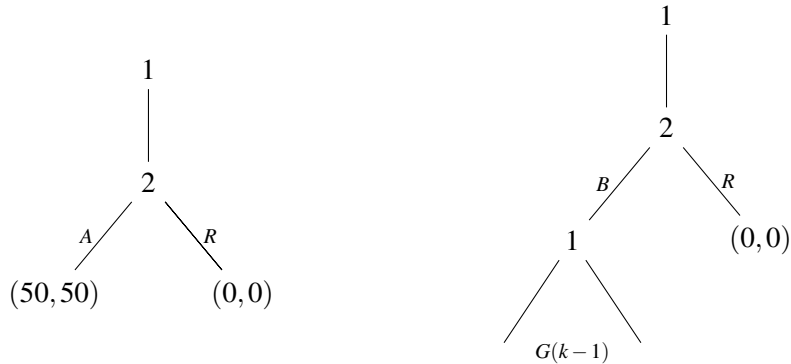
\begin{figure}[ht]
  \centering
  \tikzstyle{level 1}=[level distance=1.5cm, sibling distance=5cm]
  \tikzstyle{level 2}=[level distance=1.5cm, sibling distance=2.5cm]
  \tikzstyle{level 3}=[level distance=1.5cm, sibling distance=2cm]
  \begin{minipage}{.4\textwidth}
  \begin{tikzpicture}
    \node (r){1}
    child{
      node (b){2}
      child{
     node (f){$(50,50)$}
     edge from parent
     node[left]{\scriptsize $A$}
   }
   child{
     node (g){$(0,0)$}
     edge from parent
     node[right]{\scriptsize $R$}
     edge from parent
   }
   edge from parent
   node[left]{\scriptsize $$}
    };
  \end{tikzpicture}
  \end{minipage}
  \begin{minipage}{.4\textwidth}
   \begin{tikzpicture}
     \node (r){1}
     child{
       node (b){2}
       child{
         node (c){1}
         child{
           node(e){}
           edge from parent
         }
         child{
           node(f){}
           edge from parent
           }
         edge from parent
         node[left]{\scriptsize $B$}
       }
       child{
         node (d){$(0,0)$}
         edge from parent
         node[right]{\scriptsize $R$}
       }
       edge from parent
       node[left]{\scriptsize $$}
     };
     \path (e) -- (f) node [midway] {\scriptsize $G(k-1)$};
   \end{tikzpicture}
  \end{minipage}
  \caption{The games $G(2)$ and $G(k)$ for $k > 2$}
\label{fig:well-founded}
\end{figure}
   
It is easy to prove by induction using Theorem \ref{thm:rank} or
simply by the backward induction (the presentation of which we omit)
that each game $G(k)$, where $k \geq 2$, has a unique subgame perfect
equilibrium with the outcome $(50,50)$.

Children of the root of the game tree of $G$ are the roots of the game
trees of $G(k)$, where $k \geq 2$.  So for the game $G$ the set
$\mathrm{argmax}_{w \in C(v)} p_i(\leaf(t^w))$ referred to in Theorem
\ref{thm:rank} has exactly one element, 50.  Hence Theorem \ref{thm:rank}
implies that $G$ has a subgame perfect equilibrium, that the outcome
in each of them is (50,50), and that for each $k \geq 2$ there is 
a unique subgame perfect equilibrium in which player 1 first selects $k$.
\HB
\end{example}

\begin{example}

  Consider now the games $G(i, \alpha)$, where $i \in \{1, 2\}$ and
  $\alpha$ is an ordinal $> 1$ from Example \ref{exa:alpha}.  We
  noticed in Example \ref{exa:ultimatum2} that the game $G(1,2)$ has a
  unique subgame perfect equilibrium with the outcome (100,0). By
  symmetry the game $G(2,2)$ has a unique subgame perfect equilibrium
  with the outcome (0,100). The root of the game tree $G(i,3)$ has one
  child, which is the root of the game tree of $G(-i,2)$. Consequently
  $G(i,3)$ has a unique subgame perfect equilibrium with the outcome
  (0,100) for $i = 1$ and (100,0) for $i = 2$.

Using these observations we now show that for $i \in \{1, 2\}$ and 
ordinals $\alpha > 3$ the game $G(i, \alpha)$ has a
subgame perfect equilibrium and the outcomes in all these equilibria
are all (100,0) for $i = 1$ and (0,100) for $i = 2$.
We proceed by induction. Consider a game $G(1,\alpha)$ with
$\alpha > 3$ and assume the claim holds for all $\beta$ with
$3 \leq \beta < \alpha$.  The root of the game tree has as children
the roots of the game trees of $G(2, \beta)$, where
$1 < \beta < \alpha$.

By the induction hypothesis all these games except $G(2,3)$ have
subgame perfect equilibria with the outcomes (0,100).  For the game
$G(2,3)$, as just noted, the outcome in the unique subgame perfect
equilibrium is (100,0). So for the game $G(1, \alpha)$ the set
$\mathrm{argmax}_{w \in C(v)} p_i(\leaf(t^w))$ referred to in Theorem
\ref{thm:rank} has exactly one element, 100. Using this theorem we
conclude that the game $G(1,\alpha)$ has a subgame perfect equilibrium
and that the outcome in each equilibrium is (100,0).  A symmetric
claim, referring to (0,100) instead, holds for each game $G(2,\alpha)$
with $\alpha > 3$.

Using Theorem \ref{thm:rank} we conclude that each 
$G(i,\alpha)$ for $\alpha > 4$ has multiple subgame perfect equilibria.
\HB
\end{example}

Next, we establish a result showing that for a class of
well-founded extensive games all subgame perfect equilibria are payoff
equivalent.  The following condition was introduced in \cite{Roch80}:
\begin{equation}
  \label{equ:rochet}
\fa i \in \{1, \ldots, n\} \: \fa s, t \in S \ [p_{i}(\leaf(s)) = p_{i}(\leaf(t)) \to p(\leaf(s)) = p(\leaf(t))].
\end{equation}

This condition is in particular satisfied by the two-player well-founded extensive 
games that are \bfe{strictly competitive}, which means that 
\[
\fa i \in \{1,2\} \ \fa s,s' \in S \
p_{i}(\leaf(s)) \geq p_{i}(\leaf(s')) \text{ iff } p_{-i}(\leaf(s)) \leq p_{-i}(\leaf(s')).
\]
(To see it transpose $i$ and $-i$ and conjoin both equivalences.)

\begin{restatable}{theorem}{thmpayoff}
  \label{thm:payoff}
  In every well-founded extensive game
  that satisfies condition (\ref{equ:rochet})
all subgame perfect equilibria are payoff equivalent.
\end{restatable}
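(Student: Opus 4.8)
The plan is to proceed by transfinite induction on the rank of the game tree, using the characterization of $\spe(G)$ provided by Theorem \ref{thm:rank}. For the base case, if the rank is $0$ then $G$ consists of a single node and has a unique joint strategy, so there is nothing to compare and the claim holds trivially. For the inductive step I first record a preliminary observation that will let me invoke the induction hypothesis: condition (\ref{equ:rochet}) is inherited by every subgame $G^w$. Indeed, the leaves of $T^w$ are leaves of $T$ and the payoff functions of $G^w$ are the restrictions of those of $G$, so any instance of the implication in (\ref{equ:rochet}) for $G^w$ is already an instance of it for $G$. Hence each subgame $G^w$ with $w \in C(v)$ satisfies the condition and has strictly smaller rank, so by the induction hypothesis all its subgame perfect equilibria are payoff equivalent.

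Now fix two subgame perfect equilibria $s, s'$ of $G$, let $v$ be the root and $i = \turn(v)$; I may assume $C(v) \neq \ES$, since otherwise we are in the base case. By Theorem \ref{thm:rank} I may write $s = (f,t)$ and $s' = (f',t')$ with $t, t' \in \times_{w \in C(v)} \spe(G^w)$ and $f, f' \in F$. For each $w \in C(v)$ the components $t^w$ and ${t'}^{w}$ both lie in $\spe(G^w)$, so by the induction hypothesis they are payoff equivalent; in particular $p_i(\leaf(t^w)) = p_i(\leaf({t'}^{w}))$ for every $w$. Since $f \in F$ gives $f(t) \in \mathrm{argmax}_{w \in C(v)} p_i(\leaf(t^w))$ and $\leaf(s) = \leaf(t^{f(t)})$, and symmetrically for $s'$, I obtain
\[
p_i(\leaf(s)) = \max_{w \in C(v)} p_i(\leaf(t^w)) = \max_{w \in C(v)} p_i(\leaf({t'}^{w})) = p_i(\leaf(s')),
\]
where the middle equality holds because the two families of values coincide term by term and both maxima are attained, at $f(t)$ and $f'(t')$ respectively.

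The concluding step, which is exactly what condition (\ref{equ:rochet}) is designed for, is to lift this agreement on the single coordinate $i$ to agreement on the full payoff vector: applying (\ref{equ:rochet}) with the player $i = \turn(v)$ to the joint strategies $s, s' \in S$ yields $p(\leaf(s)) = p(\leaf(s'))$, so $s$ and $s'$ are payoff equivalent, completing the induction.

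I expect the genuine content to be concentrated in two places rather than in any hard calculation. The first is the observation that the coordinate on which I can force equality is precisely the mover's coordinate $i$ at the root, since that is the coordinate optimized by the $\mathrm{argmax}$ in Theorem \ref{thm:rank}; condition (\ref{equ:rochet}) then does the work of propagating this single-coordinate equality to the whole outcome vector. The second is the bookkeeping needed to verify that the condition descends to subgames, which is what makes the induction hypothesis applicable. The vacuous case in which some $\spe(G^w)$ is empty requires no separate argument: by Theorem \ref{thm:rank} we then have $\spe(G) = \ES$, and the statement holds with no equilibria to compare.
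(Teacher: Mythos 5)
Your proof is correct, but it takes a genuinely different route from the paper's. The paper does not use Theorem \ref{thm:rank} here at all: it takes two subgame perfect equilibria $s$ and $t$, locates the \emph{first} node $u$ on $\mathit{play}(s)$ at which the two strategies of the mover $i=\turn(u)$ diverge (a node that may lie arbitrarily deep in the tree, not necessarily at the root), and then chains the one-deviation inequality of Lemma \ref{lm:SPE} with the induction hypothesis applied to the single subgame $G^{y}$, $y = t_i(u)$, to get $p_i(\leaf(s)) \geq p_i(\leaf(t))$ and symmetrically the reverse; condition (\ref{equ:rochet}) then lifts the equality to the full payoff vector, exactly as in your last step. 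Your argument instead works only at the root, invokes the $(f,t)$-decomposition of Theorem \ref{thm:rank}, applies the induction hypothesis to \emph{all} child subgames at once, and identifies both mover payoffs with a common attained maximum. Both are transfinite inductions on rank and both need the inheritance of condition (\ref{equ:rochet}) by subgames; yours is arguably tidier because it reuses machinery already established and avoids the case analysis on whether the plays coincide, while the paper's is more self-contained (it needs only the one-deviation property, not the full structural characterization) and pins down equality of the payoff of whichever player first deviates rather than of the root mover.

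One small point you should make explicit: your one-line justification that condition (\ref{equ:rochet}) descends to $G^w$ is slightly too quick. The condition quantifies over joint strategies of the game, not over leaves, so an ``instance for $G^w$'' involves $s', t' \in S^w$, which are not elements of $S$. To turn it into an instance for $G$ you must extend $s'$ and $t'$ to joint strategies $s, t$ of $G$ that steer the play from the root of $T$ into $w$ and then follow $s'$ and $t'$, so that $\leaf(s)=\leaf(s')$ and $\leaf(t)=\leaf(t')$; this is precisely what the paper's Claim does. The extension always exists, so nothing breaks, but the step deserves a sentence.
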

\begin{proof}
  First we prove the following claim.

\II
  
\NI
\textbf{Claim}. If a well-founded extensive game satisfies
condition (\ref{equ:rochet}), then so does every subgame of it.
\II

\NI
\emph{Proof}.
Let $G$ be a well-founded extensive game that satisfies condition
(\ref{equ:rochet}).  Consider any subgame $G^w$ of $G$.  Suppose that
for some player $i$ and joint strategies $s'$ and $t'$ in $G^w$ we
have $p_{i}(\leaf(s')) = p_{i}(\leaf(t'))$.  Take some joint strategies
$s$ and $t$ in $G$ such that
$\leaf(s) = \leaf(s'), \ \leaf(t) = \leaf(t'), \ s^w = s'$ and $t^w = t'$.
Then $p_{i}(\leaf(s)) = p_{i}(\leaf(t))$, so by condition
(\ref{equ:rochet}) $p(\leaf(s)) = p(\leaf(t))$ and consequently
$p(\leaf(s')) = p(\leaf(t'))$.
\HB
\II

We now proceed by induction on the rank of the game tree. For game
trees of rank 0 the claim obviously holds. Suppose
the claim holds for all well-founded extensive games whose game tree
is of rank smaller than some ordinal $\alpha >0$.
Consider a well-founded game $G = (T, turn, p_1, \LL, p_n)$
over a game tree of rank $\alpha$ with the root $v$.
Take two subgame perfect equilibria $s$ and $t$ in $G$.

If $path(s) = path(t)$, then $p(\leaf(s)) = p(\leaf(t))$. Otherwise take
the first non-leaf node $u$ lying on $path(s)$ such that
$s_i(u) \neq t_i(u)$, where $i = turn(u)$.  Let $s_i(u)= x$ and
$t_i(u) = y$.

Both $s^{y}$ and $t^{y}$ are subgame perfect equilibria in the
subgame $G^{y}$.  By the Claim the game $G^{y}$ satisfies condition
(\ref{equ:rochet}), so by the induction hypothesis $s^{y}$ and
$t^{y}$ are payoff equivalent in $G^{y}$.  We thus have
\[
  p_i(\leaf(s)) = p_i(\leaf(s^x)) \geq p_i(\leaf(s^{y})) = p_i(\leaf(t^{y})) =
  p_i(\leaf(t)),
\]
where the inequality holds by Lemma \ref{lm:SPE}.  Analogously
$p_i(\leaf(t)) \geq p_i(\leaf(s))$, so $p_i(\leaf(s)) = p_i(\leaf(t))$ and
hence by condition (\ref{equ:rochet}) $p(\leaf(s)) = p(\leaf(t))$.
\end{proof}
\II

For finite extensive games this result was stated in \cite[page
100]{OR94} as Exercise 100.2.  The most natural proof makes use of the
backward induction. For infinite games a different proof is needed.

We say that a well-founded extensive game $(T, turn, p_1, \LL, p_n)$
satisfies the \bfe{transference of decisionmaker indifference (TDI)}
condition if: $\fa i \in \{1, \ldots, n\} \: \fa r_i, t_i \in S_i \: \fa s_{-i} \in S_{-i}$,
\[p_{i}(\leaf(r_i, s_{-i})) = p_{i}(\leaf(t_i, s_{-i})) \to
p(\leaf(r_i, s_{-i})) = p(\leaf(t_i, s_{-i})).\]
Informally, this condition states that whenever for some player $i$ two of his strategies
$r_i$ and $t_i$ are indifferent w.r.t.~some joint strategy $s_{-i}$ of the other players
then this indifference extends to all players.

Clearly, condition (\ref{equ:rochet}) implies the TDI condition.  The TDI
condition was introduced in \cite{MS97}, the results of which imply
that in every finite extensive game with perfect information that
satisfies the TDI condition all subgame perfect equilibria are payoff
equivalent.  We conjecture that this result extends to well-founded
extensive games.

\section{Win or lose and chess-like games}
\label{sec:winorlose}

In this section we characterize subgame perfect equilibria of
two-player zero-sum well-founded extensive games with, respectively,
two and three outcomes. By Corollary \ref{cor:finite} each of these
games has a subgame perfect equilibrium. Below we consider the outcomes
$(1,-1)$, $(0, 0)$, and $(-1,1)$, but the obtained results hold with the same 
proofs for arbitrary outcomes as long as the game remains zero-sum.

A two-player extensive game is called a \bfe{win or lose
  game} if the only possible outcomes are $(1,-1)$ and $(-1,1)$, with
1 associated with winning and 0 with losing.  Given a well-founded win
or lose game $G$ we call a strategy $s_i$ of player $i$ a \bfe{winning
  strategy} if $\fa s_{-i} \in S_{-i} \ p_i(\leaf(s_i, s_{-i})) = 1$.
Below we denote the (possibly empty) set of winning strategies of
player $i$ in $G$ by $\win_i(G)$.

A classic result, attributed to Zermelo \cite{Zer13}, implies that in
finite win or lose games one of the players has a winning strategy.
This result also holds for arbitrary well-founded games.

\begin{theorem} \label{thm:win}
  Let $G$ be a well-founded win or lose game. For all players $i$ we
  have $\win_i(G) \neq \emptyset$ iff $\win_{-i}(G) = \emptyset$.
\end{theorem}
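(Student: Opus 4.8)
The plan is to read the biconditional, quantified over all players $i$, as the single assertion that \emph{exactly one} of the two players has a winning strategy in $G$. I would split this into two parts: an elementary global argument that \emph{at most} one player can win, and a transfinite induction establishing determinacy, i.e.\ that \emph{at least} one player wins. Both directions of the stated equivalence then follow: if $\win_i(G) \neq \ES$ then by ``at most one'' the opponent cannot win, so $\win_{-i}(G) = \ES$; conversely, if $\win_{-i}(G) = \ES$ then determinacy forces $\win_i(G) \neq \ES$. For the ``at most one'' part, suppose towards a contradiction that $s_i \in \win_i(G)$ and $s_{-i} \in \win_{-i}(G)$, and consider the joint strategy $(s_i, s_{-i})$. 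Since $s_i$ is winning we get $p_i(\leaf(s_i,s_{-i})) = 1$, and since $s_{-i}$ is winning we get $p_{-i}(\leaf(s_i,s_{-i})) = 1$; but the game is zero-sum with outcomes $(1,-1)$ and $(-1,1)$, so these two payoffs must sum to $0$, a contradiction.

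The core of the proof is determinacy, which I would prove by transfinite induction on $\mathit{rank}(T)$. For the base case $\mathit{rank}(T) = 0$ the tree is a single leaf, whose outcome is either $(1,-1)$ or $(-1,1)$; the corresponding player's unique (empty) strategy is then winning, so the claim holds. For the inductive step let $\mathit{rank}(T) = \alpha > 0$ with root $v$ and let $i = \turn(v)$. Then $v$ is a non-leaf, so $C(v) \neq \ES$, and for each $w \in C(v)$ the subgame $G^w$ is a well-founded win or lose game whose tree has rank $< \alpha$; by the induction hypothesis each $G^w$ is determined.

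I would then split on whether player $i$ can already secure a win in some subgame. If there is a child $w^* \in C(v)$ with $\win_i(G^{w^*}) \neq \ES$, define a strategy for $i$ in $G$ that plays $s_i(v) = w^*$, follows a winning strategy of $i$ in $G^{w^*}$ on the nodes of $T^{w^*}$, and is arbitrary elsewhere. For any opponent strategy the resulting play moves from $v$ into $T^{w^*}$ and thereafter coincides with a play of $G^{w^*}$ under the chosen winning strategy, so the outcome is a win for $i$; hence $\win_i(G) \neq \ES$. Otherwise $\win_i(G^w) = \ES$ for every $w \in C(v)$, so by determinacy of each subgame (the induction hypothesis) we have $\win_{-i}(G^w) \neq \ES$ for every $w$. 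Choosing a winning strategy of $-i$ in each $G^w$ and gluing them together yields a single strategy for $-i$ in $G$; this is well-defined because $v \in V_i$, so every node of $V_{-i}$ lies in exactly one subtree $T^w$. Whatever first move $s_i(v) = w$ player $i$ makes, the play enters $T^w$ and follows the winning strategy of $-i$ there, so $-i$ wins; hence $\win_{-i}(G) \neq \ES$. In either case exactly one player wins, completing the induction.

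The main obstacle is the second case of the inductive step: unlike backward induction, the argument must combine \emph{possibly infinitely many} subgame winning strategies of $-i$ into one global strategy, since the well-founded game trees of interest branch infinitely. Selecting one winning strategy per child in general invokes the axiom of choice, and one must verify that the glued strategy is genuinely winning against \emph{every} first move of player $i$ -- which is exactly where the well-foundedness of $T$ (guaranteeing that every play reaches a leaf) and the case assumption $\win_i(G^w)=\ES$ for all $w$ are used.
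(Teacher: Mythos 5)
Your proof is correct, but it is organized differently from the paper's. The paper does not separate ``at most one player wins'' from determinacy: it first records two equivalence chains tied to the root $v$ with $i = \turn(v)$ --- namely $s_i \in \win_i(G)$ iff $s_i^w \in \win_i(G^w)$ for $w = s_i(v)$, and $s_{-i} \in \win_{-i}(G)$ iff $s_{-i}^w \in \win_{-i}(G^w)$ for \emph{all} $w \in C(v)$ --- and then runs a single transfinite induction on the rank in which the full biconditional is the induction hypothesis, so that both directions ($\win_i(G) \neq \ES \Rightarrow \win_{-i}(G) = \ES$ and its converse) fall out simultaneously from the chains. Your decomposition instead proves ``at most one'' once and for all by the zero-sum clash of two winning strategies (no induction needed), and reserves the induction for determinacy alone, with an explicit two-case strategy construction (extend a winning strategy from one subgame, or glue opponent winning strategies across all subgames). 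The mathematical core of the inductive step is the same in both: the player moving at the root wins $G$ iff he wins \emph{some} subgame, his opponent wins $G$ iff she wins \emph{every} subgame, and the ``every subgame'' direction requires gluing one chosen winning strategy per child --- an appeal to the axiom of choice that you flag explicitly and the paper makes silently inside its equivalence chains. What your split buys is a cleaner isolation of where well-foundedness is actually needed (only for existence, not for exclusivity) and a lighter induction hypothesis; what the paper's version buys is compactness and slightly more information, since its chains characterize the sets $\win_i(G)$ and $\win_{-i}(G)$ themselves in terms of the subgames rather than only their non-emptiness.
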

\begin{proof}
We have the following sequences of equivalences, where $i = turn(v)$:

\begin{tabular}{ll}
      & $s_i \in \win_i(G)$ \\
  iff &  \ \ \{ the definition of $\win_i(G)$ \} \\
      & for all $s_{-i} \in S_{-i}$, $p_i(\leaf(s_i,s_{-i}))=1$ \\ 
  iff &  \ \ \{ $i = \turn(v)$ \} \\
      & for all $s_{-i}^w \in S_{-i}^w$, $p_i(\leaf(s^w_i,s^w_{-i}))=1$,  where $w=s_i(v)$ \\
  iff &  \ \ \{ definition of a winning strategy \} \\
      & $s_i^w \in \win_i(G^w)$, where $w=s_i(v)$.
\end{tabular}
\II

\NI
and

\begin{tabular}{ll}
      & $s_{-i} \in \win_{-i}(G)$ \\
   iff &  \ \ \{ the definition of $\win_{-i}(G)$ \} \\
       & for all $s_{i} \in S_i$, $p_{-i}(\leaf(s_i,s_{-i}))=1$ \\
   iff &  \ \ \{ $i = \turn(v)$ \} \\
       & for all $w \in C(v)$ and $s_{i}^w \in S_{i}^w$, $p_{-i}(\leaf(s^w_i,s^w_{-i}))=1$ \\
   iff &  \ \ \{ definition of a winning strategy \} \\
       & for all $w \in C(v)$, $s_{-i}^w \in \win_{-i}(G^w)$.

\end{tabular}

  We now prove the claim by induction on the rank of the game tree.
  For game trees of rank 0 the claim clearly holds.
  Suppose that it holds for all well-founded win or lose games with
  game trees of rank smaller than some ordinal $\alpha > 0$ and consider a
  win or lose game $G$ with the well-founded game tree of rank
  $\alpha$ and rooted at $v$. Let $i = \turn(v)$.

  By the induction hypothesis for all $w \in C(v)$,
  $\win_i(G^w) \neq \ES$ iff $\win_{-i}(G^w) = \emptyset$, so the
  above equivalences imply the following string of equivalences:
\[
\mbox{$\win_i(G) \neq \emptyset$ iff for some $w \in C(v)$,
  $\win_i(G^w) \neq \ES$ iff for some $w \in C(v)$, $\win^w_{-i}(G) = \emptyset$ iff
    $\win_{-i}(G) = \emptyset$}
\]
and hence also $\win_{-i}(G) \neq \emptyset$ iff $\win_{i}(G) = \emptyset$.
\end{proof}
\II

From Corollary \ref{cor:finite} we know that every well-founded
win or lose game has a subgame perfect equilibrium, thus in particular
a Nash equilibrium. The following result clarifies the relation between
Nash equilibria and winning strategies.
We denote the set of Nash equilibria in an extensive game $G$ by
$\NE(G)$.

\begin{restatable}{corollary}{corNeWin}
  \label{cor:NeWin}
  Consider a well-founded win or lose game $G$.
  For some player $i$, $\NE(G) = \win_i(G) \times S_{-i}$.
\end{restatable}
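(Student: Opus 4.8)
The plan is to first pin down which player $i$ the statement refers to. By Theorem \ref{thm:win} we have $\win_i(G) \neq \emptyset$ iff $\win_{-i}(G) = \emptyset$ for each player, and reading this biconditional for both players shows that it is impossible for both $\win_1(G)$ and $\win_2(G)$ to be empty (and equally impossible for both to be non-empty). Hence exactly one player has a winning strategy; I take $i$ to be that player, so that $\win_i(G) \neq \emptyset$ and $\win_{-i}(G) = \emptyset$. I would then prove the two inclusions of $\NE(G) = \win_i(G) \times S_{-i}$ separately, using throughout that the game is zero-sum with outcomes only $(1,-1)$ and $(-1,1)$, so that $p_i(\leaf(\cdot)) \in \{1, -1\}$ and $p_{-i} = -p_i$.

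For the inclusion $\win_i(G) \times S_{-i} \subseteq \NE(G)$, I would take $s_i \in \win_i(G)$ and an arbitrary $s_{-i} \in S_{-i}$ and check the two best-response conditions. Player $i$ obtains payoff $1$ at $(s_i, s_{-i})$, which is the maximal possible value, so $s_i$ is trivially a best response to $s_{-i}$. For player $-i$, note that since $s_i$ is winning, every joint strategy of the form $(s_i, s'_{-i})$ yields payoff $1$ for $i$, hence $-1$ for $-i$; thus $-1$ is the best player $-i$ can achieve against $s_i$, and $s_{-i}$ attains it, so $s_{-i}$ is also a best response. Therefore $(s_i, s_{-i}) \in \NE(G)$.

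The main work, and the expected obstacle, is the reverse inclusion $\NE(G) \subseteq \win_i(G) \times S_{-i}$. Here I would argue by contradiction: suppose $(s_i, s_{-i}) \in \NE(G)$ but $s_i \notin \win_i(G)$. Since $s_i$ is not winning and the only outcomes are $\pm 1$, there is some $s'_{-i}$ with $p_i(\leaf(s_i, s'_{-i})) = -1$, i.e.\ $p_{-i}(\leaf(s_i, s'_{-i})) = 1$. As $s_{-i}$ is a best response to $s_i$ in the Nash equilibrium, it follows that $p_{-i}(\leaf(s_i, s_{-i})) = 1$, and so $p_i(\leaf(s_i, s_{-i})) = -1$. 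Now I invoke that $s_i$ is also a best response to $s_{-i}$: for every $s'_i$ we get $p_i(\leaf(s'_i, s_{-i})) \leq p_i(\leaf(s_i, s_{-i})) = -1$, and since payoffs lie in $\{1, -1\}$ this forces $p_i(\leaf(s'_i, s_{-i})) = -1$ for all $s'_i$, equivalently $p_{-i}(\leaf(s'_i, s_{-i})) = 1$ for all $s'_i$. By the definition of a winning strategy this says $s_{-i} \in \win_{-i}(G)$, contradicting $\win_{-i}(G) = \emptyset$. Hence $s_i \in \win_i(G)$, and since $s_{-i}$ was arbitrary, $(s_i, s_{-i}) \in \win_i(G) \times S_{-i}$. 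The delicate point throughout is the repeated use of the two-outcome, zero-sum structure to turn ``not a best response'' or ``not winning'' into the existence of a maximal deviation; this is exactly what makes the equilibrium component of a non-winning strategy collapse into a winning strategy for the opponent.
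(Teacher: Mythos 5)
Your proposal is correct and follows essentially the same route as the paper: both inclusions rest on Theorem \ref{thm:win} and the two-outcome zero-sum structure, and your forward inclusion is identical to the paper's. The only (immaterial) difference is in closing the reverse inclusion: the paper deviates player $i$ to a known winning strategy to pin the equilibrium payoff at $1$ and then contradicts player $-i$'s best-response condition, whereas you derive that $s_{-i}$ would itself be a winning strategy for $-i$ and contradict $\win_{-i}(G)=\emptyset$.
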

\begin{proof}
By Theorem \ref{thm:win} $\win_1(G) \neq \ES$ or $\win_2(G) \neq \ES$.
Suppose without loss of generality that $\win_1(G) \neq \ES$.
\II

\NI
($\Ra$) 
Let $(s_1,s_2)$ be a Nash equilibrium and $t_1$ be a winning
strategy for player 1.
Then we have
$p_1(\leaf(s_1, s_2)) \geq p_1(\leaf(t_1, s_2)) = 1$ and hence
$p_2(\leaf(s_1, s_2)) = -1$.
If $s_1$ is not a winning strategy for player 1, then for some player 2
strategy $t_2$ we have $p_1(\leaf(s_1, t_2)) = -1$, i.e.,
$p_2(\leaf(s_1, t_2)) = 1 > p_2(\leaf(s_1, s_2))$, which contradicts the
fact that $(s_1,s_2)$ is a Nash equilibrium. So
$\NE(G) \sse \win_1(G) \times S_{2}$.  \II

\NI
($\La$) Take a winning strategy $s_1$ for player 1. Then for all strategies $t_1$ of player 1
and $s_2$ and $t_2$ of player 2,
\[
  p_1(\leaf(t_1, s_2)) \leq p_1(\leaf(s_1, s_2)) = p_1(\leaf(s_1, t_2)).
\]
So $(s_1,s_2)$ is a Nash equilibrium. Hence $\win_1(G) \times S_{2} \sse \NE(G)$.
\end{proof}
\II

In general the sets of subgame perfect equilibria and Nash equilibria
differ, so we cannot replace in the above result $\NE(G)$ by $\spe(G)$.
However, the above corollary directly implies the following characterization of
subgame perfect equilibria.

\begin{corollary}
Let $G$ be a well-founded win or lose game on a game tree $(V,E)$
  with the set of leaves $Z$. Then
    $\spe(G) = \{s \in S \mid \fa w \in V \setminus Z \: \te i \: [s^w
    \in \win_i(G^w) \times S^w_{-i}] \}$. 
\end{corollary}
It is easy to see that one cannot reverse here the order of the quantifiers.
\II

We now consider a related class of games often called \bfe{chess-like
  games}. These are two-player well-founded extensive games in which
the only possible outcomes are $(1,-1)$, $(0, 0)$, and $(-1,1)$, with
$0$ interpreted as a \emph{draw}.  We say that a strategy $s_i$ of
player $i$ in such a game \bfe{guarantees him at least a draw} if
\[
\fa s_{-i} \in S_{-i} \: p_i(leaf(s_i, s_{-i})) \geq 0,
\]
and denote the (possibly empty) set of such strategies 
by $\draw_{i}(G)$.

We now prove the following result for well-founded chess-like
games. The set $\win_{i}(G)$ is defined as above.

\begin{theorem} \label{thm:chess}
  In every well-founded chess-like game $G$
  \[
    \mbox{$\win_{1}(G) \neq \ES$ or $\win_{2}(G) \neq \ES$ or
      ($\draw_{1}(G) \neq \ES$ and $\draw_{2}(G) \neq \ES$).}
  \]
\end{theorem}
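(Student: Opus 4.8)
The plan is to follow the template of the proof of Theorem \ref{thm:win} and argue by transfinite induction on the rank of the game tree, after first recording how winning and drawing strategies in $G$ decompose over the subgames rooted at the children of the root $v$. Writing $i = \turn(v)$, the facts I would establish are four equivalences: the mover $i$ has a winning strategy in $G$ iff $\win_i(G^w) \neq \emptyset$ for \emph{some} child $w \in C(v)$, and he has a strategy guaranteeing at least a draw iff $\draw_i(G^w) \neq \emptyset$ for some child $w$; dually, the opponent $-i$ has a winning (resp.\ drawing) strategy in $G$ iff $\win_{-i}(G^w) \neq \emptyset$ (resp.\ $\draw_{-i}(G^w) \neq \emptyset$) for \emph{every} child $w \in C(v)$. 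Each is proved exactly as the displayed equivalences in the proof of Theorem \ref{thm:win}, using $i = \turn(v)$ and the fact that a strategy for the non-mover may be chosen independently on each subtree; the drawing versions are identical with $\geq 0$ in place of $=1$. I would also record the trivial inclusion $\win_j(G^w) \subseteq \draw_j(G^w)$, since a strategy forcing $p_j = 1$ a fortiori forces $p_j \geq 0$.

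For the base case, a game tree of rank $0$ is a single leaf carrying a fixed outcome: if it is $(1,-1)$ then $\win_1(G) \neq \emptyset$, if $(-1,1)$ then $\win_2(G) \neq \emptyset$, and if $(0,0)$ then the empty strategy lies in both $\draw_1(G)$ and $\draw_2(G)$, so the disjunction holds in every case.

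For the inductive step at a root $v$ of rank $\alpha > 0$ I take $i = \turn(v)$; by symmetry it suffices to treat $i = 1$. If $\win_1(G) \neq \emptyset$ or $\win_2(G) \neq \emptyset$ we are immediately done. Otherwise the equivalences give $\win_1(G^w) = \emptyset$ for every child $w$, and $\win_2(G^{w_0}) = \emptyset$ for some child $w_0$. Applying the induction hypothesis to $w_0$, where neither player wins, forces the third disjunct, so $\draw_1(G^{w_0}) \neq \emptyset$ and $\draw_2(G^{w_0}) \neq \emptyset$; the former yields $\draw_1(G) \neq \emptyset$ via the ``some child'' equivalence. For an arbitrary child $w$, the induction hypothesis together with $\win_1(G^w) = \emptyset$ gives either $\win_2(G^w) \neq \emptyset$ or $\draw_2(G^w) \neq \emptyset$, and the inclusion $\win_2 \subseteq \draw_2$ collapses both alternatives to $\draw_2(G^w) \neq \emptyset$; since this holds for all $w$, the ``every child'' equivalence delivers $\draw_2(G) \neq \emptyset$. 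Hence both players guarantee at least a draw, which is the third disjunct.

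The main obstacle I anticipate is bookkeeping rather than depth: one must combine the three-way induction hypothesis across all children while respecting the asymmetry between the mover, who needs only one favourable child, and the non-mover, who needs every child to be favourable. The device that makes the drawing case go through is precisely the inclusion $\win_j \subseteq \draw_j$, which lets the ``player $2$ wins at $w$'' branch of the hypothesis still contribute a drawing strategy at $w$, so that $\draw_2(G^w) \neq \emptyset$ uniformly over the children. Verifying the drawing equivalences verbatim with $\geq 0$ is routine but should be stated explicitly, as it is the only place the draw outcome enters.
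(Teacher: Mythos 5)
Your proof is correct, but it takes a genuinely different route from the one in the paper. You prove the theorem by a fresh transfinite induction on the rank of the game tree, mirroring the structure of the proof of Theorem \ref{thm:win}: you set up four decomposition equivalences at the root (the mover has a winning, respectively drawing, strategy iff some child's subgame admits one; the non-mover iff every child's subgame does), handle the three possible leaf outcomes in the base case, and in the inductive step combine the three-way induction hypothesis across the children, using the inclusion $\win_j(G^w) \subseteq \draw_j(G^w)$ to collapse the ``opponent wins at $w$'' branch into ``opponent draws at $w$'' uniformly over all children. The paper instead avoids any new induction: it defines two auxiliary win-or-lose games $G_1$ and $G_2$ by replacing each $(0,0)$ outcome with $(-1,1)$ and $(1,-1)$ respectively, observes that $\win_{1}(G_1) = \win_{1}(G)$, $\win_{2}(G_1) = \draw_{2}(G)$, $\win_{1}(G_2) = \draw_{1}(G)$, $\win_{2}(G_2) = \win_{2}(G)$, applies Theorem \ref{thm:win} to each auxiliary game to get the two disjunctions $A \lor B$ and $C \lor D$, and finishes with a short propositional computation. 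The paper's reduction is shorter and makes clear that the chess-like case is literally two instances of Zermelo's theorem; your argument is longer and duplicates the induction machinery, but it is self-contained, makes the drawing-strategy decomposition over children explicit (which is of independent interest and is implicitly what makes the auxiliary-game identities true), and would adapt more directly to zero-sum games with more than three outcomes.
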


It states that in every chess-like game either one of the players has
a winning strategy or each player has a strategy that guarantees him
at least a draw.  These three alternatives are mutually exclusive,
since for all $i \in \{1,2\}$, $\win_{i}(G) \neq \ES$ implies both
$\win_{-i}(G) = \ES$ and $\draw_{-i}(G) = \ES$.

\begin{proof}
  We introduce the following abbreviations:

  \begin{itemize}
  \item $A$ for $\win_{1}(G) \neq \ES$,
    
  \item $B$ for $\draw_{2}(G) \neq \ES$,

  \item $C$ for $\win_{2}(G) \neq \ES$,

  \item $D$ for $\draw_{1}(G) \neq \ES$.
    
  \end{itemize}

Let $G_1$ and $G_2$ be the modifications of $G$ in which each
outcome $(0, 0)$ is replaced for $G_1$ by $(-1,1)$ and for $G_2$
by $(1,-1)$.  Then $\win_{1}(G_1) = \win_{1}(G)$, $\win_{2}(G_1) = \draw_{2}(G)$,
    $\win_{1}(G_2) = \draw_{1}(G)$, and $\win_{2}(G_2) = \win_{2}(G)$.

    Hence by Theorem \ref{thm:win} applied to the games $G_1$ and
    $G_2$ we have $A \lor B$ and $C \lor D$, so
    $(A \land C) \lor (A \land D) \lor (B \land C) \lor (B \land D)$,
    which implies $A \lor C \lor (B \land D)$, since
    $\neg (A \land C)$, $(A \land D) \equiv A$, and
    $(B \land C) \equiv C$.
\end{proof}

\II

For finite games, the above result is formulated in \cite[page
  125]{vNM04}. The proof first uses backward induction (apparently the
first use of it in the literature on game theory) to establish the
existence of a Nash equilibrium. Subsequently, (what is now called)
the Minimax theorem is invoked to conclude that the payoff to the
first player (and hence the second, as well) in any Nash equilibrium
is unique. Finally, it is observed that each possible payoff value
corresponds to one of the three disjuncts in the above theorem. The
above theorem clarifies that this result holds for well-founded
games as well, and that it can be proved in a simple way, without the
use of backward induction.

In \cite{Ewe02twoplayer} a proof of this result is provided for
chess-like games in which infinite plays, interpreted as draw, are
allowed. The proof does not rely on backward induction and is also
valid for well-founded chess-like games.

\begin{corollary}
  Consider a well-founded chess-like game $G$.
  For some player $i$
  \[
\mbox{$\NE(G) = \win_i(G) \times S_{-i}$ or $\NE(G) = \draw_i(G) \times \draw_{-i}(G)$.}
  \]
\end{corollary}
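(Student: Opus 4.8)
The plan is to invoke Theorem~\ref{thm:chess} to split the analysis into its three mutually exclusive cases, and to handle the two winning cases and the single draw case separately, matching each to one of the two disjuncts in the statement.

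First I would treat the case $\win_1(G) \neq \ES$ (with $\win_2(G) \neq \ES$ entirely symmetric), claiming that then $\NE(G) = \win_1(G) \times S_2$; the argument is essentially that of Corollary~\ref{cor:NeWin}, which carries over verbatim because a winning strategy still yields player~1 the maximal payoff $1$ and hence player~2 the minimal payoff $-1$, irrespective of the extra outcome $(0,0)$. Concretely, for $\NE(G) \sse \win_1(G) \times S_2$, given a Nash equilibrium $(s_1,s_2)$ and a winning strategy $t_1$ one gets $p_1(\leaf(s_1,s_2)) \geq p_1(\leaf(t_1,s_2)) = 1$, forcing $p_2(\leaf(s_1,s_2)) = -1$; if $s_1$ were not winning, player~2 could deviate to some $t_2$ with $p_1(\leaf(s_1,t_2)) \leq 0$, i.e.\ $p_2(\leaf(s_1,t_2)) \geq 0 > -1$, contradicting the equilibrium property. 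The reverse inclusion is immediate, since a winning $s_1$ makes every $s_2$ a best response while keeping $s_1$ optimal.

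The substantive case is the third, where $\win_1(G) = \win_2(G) = \ES$ while $\draw_1(G) \neq \ES$ and $\draw_2(G) \neq \ES$; here I would show $\NE(G) = \draw_1(G) \times \draw_2(G)$, so that taking $i=1$ gives the second disjunct. The key observation, using that the game is zero-sum, is that if $s_1 \in \draw_1(G)$ and $s_2 \in \draw_2(G)$ then $p_1(\leaf(s_1,s_2)) \geq 0$ and $p_2(\leaf(s_1,s_2)) \geq 0$ together with $p_1 + p_2 = 0$ force the outcome to be exactly $(0,0)$. From this, $\draw_1(G) \times \draw_2(G) \sse \NE(G)$ follows because any deviation by either player can only lower that player's own payoff below $0$. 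For the reverse inclusion, given a Nash equilibrium $(s_1,s_2)$, comparing it against fixed $t_1 \in \draw_1(G)$ and $t_2 \in \draw_2(G)$ yields $p_1(\leaf(s_1,s_2)) \geq p_1(\leaf(t_1,s_2)) \geq 0$ and symmetrically $p_2(\leaf(s_1,s_2)) \geq 0$, pinning the outcome again to $(0,0)$; and if $s_1 \notin \draw_1(G)$ (resp.\ $s_2 \notin \draw_2(G)$) there would be a deviation giving the opponent payoff $1 > 0$, contradicting the equilibrium.

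I expect the winning cases to be routine, amounting to a reuse of the argument behind Corollary~\ref{cor:NeWin}. The main obstacle is the bookkeeping in the draw case: one must apply the zero-sum identity $p_1 + p_2 = 0$ at exactly the right points to fix the equilibrium outcome at $(0,0)$, and then run both membership directions symmetrically, being careful that failure of $s_i \in \draw_i(G)$ means some opponent deviation produces the strict payoff $1$ rather than merely a weak improvement.
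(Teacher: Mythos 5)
Your proof is correct, and it reaches the conclusion by a somewhat different route than the paper. Both arguments begin from the same trichotomy (who, if anyone, has a winning strategy), and both verify the inclusion $\win_i(G)\times S_{-i}\subseteq \NE(G)$, resp.\ $\draw_i(G)\times\draw_{-i}(G)\subseteq \NE(G)$, by direct payoff comparison. The difference lies in the forward inclusion $\NE(G)\subseteq\cdots$: the paper's proof does not re-derive it inside $G$ but instead passes to the auxiliary win-or-lose games $G_1$ and $G_2$ (obtained by recoloring the draw outcome as a win for player $2$, resp.\ player $1$), observes that $\NE(G)\subseteq \NE(G_1)\cap \NE(G_2)$, and then imports Corollary~\ref{cor:NeWin} wholesale, using the identities $\win_1(G_1)=\win_1(G)$, $\win_2(G_1)=\draw_2(G)$, $\win_1(G_2)=\draw_1(G)$, $\win_2(G_2)=\win_2(G)$; in particular the draw case falls out as $\NE(G_1)\cap \NE(G_2)=\draw_1(G)\times\draw_2(G)$. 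You instead redo the payoff-comparison arguments from scratch within $G$ itself: comparing a Nash equilibrium against a fixed winning strategy in the winning cases, and against fixed elements of $\draw_1(G)$ and $\draw_2(G)$ (whose existence is exactly what Theorem~\ref{thm:chess} supplies) to pin the equilibrium outcome at $(0,0)$ in the draw case. Your version is more self-contained and elementary; the paper's is shorter because it reuses the win-or-lose machinery, at the cost of the small extra lemma $\NE(G)\subseteq \NE(G_1)\cap\NE(G_2)$. One cosmetic remark: in the draw case your phrase ``any deviation by either player can only lower that player's own payoff below $0$'' should read ``cannot raise it above $0$'' --- what the opponent's drawing strategy guarantees is $p_{\text{deviator}}\leq 0$, not a strict drop --- but this does not affect the argument.
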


\begin{proof}
  Consider the games $G_1$ and $G_2$ from the proof of Theorem
  \ref{thm:chess}.  We noticed there that
  $\win_{1}(G_1) = \win_{1}(G)$, $\win_{2}(G_1) = \draw_{2}(G)$,
  $\win_{1}(G_2) = \draw_{1}(G)$, and $\win_{2}(G_2) = \win_{2}(G)$.

  So if $\win_1(G) \neq \ES$, then by Corollary \ref{cor:NeWin}
  applied to the game $G_1$ we get $\NE(G_1) = \win_1(G) \times S_2$,
  and if $\win_2(G) \neq \ES$, then by Corollary \ref{cor:NeWin}
  applied to the game $G_2$ we get $\NE(G_2) = S_1 \times \win_2(G)$.

  Suppose now that for both players $i$, $\win_i(G) = \ES$. Then both
  $\win_{1}(G_1) = \ES$ and $\win_{2}(G_2) = \ES$, so by Corollary
  \ref{cor:NeWin} applied to the games $G_2$ and $G_1$ we get both
  $\NE(G_2) = \draw_{1}(G) \times S_2$ and
  $\NE(G_1) = S_1 \times \draw_{2}(G)$. This implies
  $\NE(G_1) \cap \NE(G_2)  = \draw_{1}(G) \times \draw_{2}(G)$.
  
  Further, it is easy to see that $\NE(G) \sse \NE(G_1)$ and $\NE(G)
  \sse \NE(G_2)$.  Thus we have established that for some player $i$
  \[
\mbox{$\NE(G) \sse \win_i(G) \times S_{-i}$ or $\NE(G) \sse \draw_i(G) \times \draw_{-i}(G)$.}
\]

To complete the proof, let $p_i$ denote the payoff
function of player $i$ in the game $G$.
Suppose there exists a player $i$ such that $\win_i(G) \neq \emptyset$
and let $s \in \win_i(G) \times S_{-i}$.  By the definition of
$\win_i(G)$ for all $s'_{-i}$ we have
$p_i(leaf(s_i,s'_{-i}))=1$. Hence, since $G$ is a zero-sum game, for
all $s'_{-i}$ we have $p_{-i}(leaf(s_i,s'_{-i}))= -1$. Since 1 is a
maximum payoff for all $s'_i$ we also have
$p_i(leaf(s)) \geq p_i(leaf(s'_i,s_{-i}))$.  This shows that $s$ is a
Nash equilibrium of $G$.

Suppose now that for all players $i$, $\win_i(G) = \emptyset$. Fix
some $i \in \{1,2\}$ and let $s \in \draw_i(G) \times
\draw_{-i}(G)$. By the definition of the sets $\draw_i(G)$
\begin{itemize}
\item for all $s'_{-i}$, $p_i(leaf(s_i,s'_{-i})) \geq 0$ and
\item for all $s'_{i}$, $p_{-i}(leaf(s'_i,s_{-i})) \geq 0$.
\end{itemize}
Hence, since $G$ is a zero-sum game,  $p(leaf(s))=(0,0)$ and
\begin{itemize}
\item for all $s'_{-i}$, $p_{-i}(leaf(s_i,s'_{-i})) \leq 0$ and
\item for all $s'_{i}$, $p_{i}(leaf(s'_i,s_{-i})) \leq 0$.
\end{itemize}
This means that $s$ is a Nash equilibrium of $G$.
\end{proof}

\begin{corollary}
  Consider a well-founded chess-like game $G$ on a game tree $(V,E)$
  with the set of leaves $Z$. Then
  \[
    \spe(G) = \{s \in S \mid \fa w \in V \setminus Z \: \te i \: [s^w
    \in (\win_i(G^w) \times S^w_{-i}) \cup (\draw_i(G^w) \times
    \draw_{-i}(G^w))] \}.
  \]
\end{corollary}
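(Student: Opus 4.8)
The plan is to unfold the definition of a subgame perfect equilibrium and then feed each subgame into the corollary that immediately precedes this one (the one characterizing $\NE(G)$ for chess-like games). By definition, $s \in \spe(G)$ iff for every node $w$ of $T$ the induced joint strategy $s^w$ is a Nash equilibrium in $G^w$. For a leaf $w \in Z$ the subgame $G^w$ consists of a single node, so its only joint strategy $(\ES, \LL, \ES)$ is trivially a Nash equilibrium; hence the condition at leaves is vacuous and I may restrict the universal quantifier to the non-leaf nodes $w \in V \setminus Z$. This reduces the goal to proving, for each such $w$, the biconditional $s^w \in \NE(G^w)$ iff $\te i \: [s^w \in (\win_i(G^w) \times S^w_{-i}) \cup (\draw_i(G^w) \times \draw_{-i}(G^w))]$.

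Next I would note that every subgame $G^w$ is itself a well-founded chess-like game: it is two-player and zero-sum, its tree $T^w$ inherits well-foundedness, and its outcomes form a subset of $\{(1,-1),(0,0),(-1,1)\}$. Thus the preceding corollary applies to $G^w$ and tells us that, for some player $i$, $\NE(G^w)$ equals exactly one of the two sets $\win_i(G^w) \times S^w_{-i}$ or $\draw_i(G^w) \times \draw_{-i}(G^w)$. This at once gives the forward implication: if $s^w \in \NE(G^w)$, then $s^w$ lies in that single designated set, which is one of the two sets joined by the union in the characterization for that same $i$, so the existential statement holds.

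For the converse I would rely on the two inclusions $\win_i(G^w) \times S^w_{-i} \sse \NE(G^w)$ and $\draw_i(G^w) \times \draw_{-i}(G^w) \sse \NE(G^w)$, valid for either $i$ (and vacuously so when the relevant set is empty). These are precisely the computations carried out inside the proof of the preceding corollary: that a member of $\win_i \times S_{-i}$ is a Nash equilibrium (using that $1$ is the maximal payoff together with zero-sumness) and that a member of $\draw_i \times \draw_{-i}$ is a Nash equilibrium (using the two defining inequalities and zero-sumness to pin the outcome to $(0,0)$). Consequently, if $s^w$ lies in $(\win_i(G^w) \times S^w_{-i}) \cup (\draw_i(G^w) \times \draw_{-i}(G^w))$ for some $i$, then $s^w \in \NE(G^w)$. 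Conjoining the two directions over all $w \in V \setminus Z$ yields the displayed equality.

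The one delicate point is the asymmetry between the two directions, and this is the step I expect to require the most care in the write-up. The forward direction uses the equality from the preceding corollary, namely that $\NE(G^w)$ coincides with a single one of the candidate sets. The converse, however, needs the weaker but crucially \emph{unconditional} fact that each candidate set is contained in $\NE(G^w)$. I would therefore make explicit that the arguments in the preceding proof establishing the Nash-equilibrium property of elements of $\win_i \times S_{-i}$ and of $\draw_i \times \draw_{-i}$ do not actually invoke the case distinction under which they are presented, so both inclusions hold in any chess-like game, in particular in $G^w$. Everything else is routine bookkeeping with the subgame notation $s^w$, $S^w_{-i}$ and the reduction of the quantifier to $V \setminus Z$.
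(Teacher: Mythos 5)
Your proposal is correct and follows exactly the route the paper intends: the paper states this corollary without proof, presenting it as a direct consequence of the preceding characterization of $\NE(G)$ for chess-like games, and your argument simply makes that derivation explicit. Your one ``delicate point'' is well spotted --- the converse direction does need the unconditional inclusions $\win_i(G^w) \times S^w_{-i} \sse \NE(G^w)$ and $\draw_i(G^w) \times \draw_{-i}(G^w) \sse \NE(G^w)$, and the computations in the second half of the preceding corollary's proof do establish these without using the case hypotheses under which they appear, so the proof goes through.
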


\section{Conclusions}

In this paper we studied well-founded extensive games with perfect
information. We focused on the existence and structural
characterization of the sets of subgame perfect equilibria. We also
provided such characterizations for two classes of two-player zero-sum
games: win or lose games and chess-like games.  It will be interesting
to consider in this setting other notions and solution concepts that
have been well-studied in finite games.

One of them is weak dominance.  For finite games, its relation to
backward induction was studied in \cite{MS97}. The authors showed that
for finite game that satisfy the TDI condition from Section
\ref{sec:spe} the elimination of weakly dominated strategies is order
independent and is guaranteed to solve the game. The author of
\cite{Ewe02twoplayer,Ewe02} studied zero-sum extensive games and
showed that every such game with finitely many outcomes can be solved
by iterated elimination of weakly dominated strategies.

The definition of weak dominance applies to well-founded
extensive games, as well, but the resulting dynamics may be different. For
instance, it is possible that the iterated elimination of weakly
dominated strategies can then result in empty strategy sets for all or
for some players.  Also, it may happen that the elimination process
has to be iterated over ordinals larger than $\omega$.  It would be interesting
to identify subclasses of well-founded extensive games which can be
solved by the iterated elimination of weakly dominated strategies and
for which it is order independent.

Another direction is a study of the dynamics of strategy improvement
in terms of best (or better) response updates. For finite extensive
games, the relation between the improvement dynamics and Nash
equilibria was analyzed in \cite{Kuk02,BGHR17}. For restricted classes
of infinite games of perfect information, improvement dynamics were
studied in \cite{BEGM12,RP20}. It is an interesting question how the
improvement dynamics and Nash and subgame perfect equilibria relate in
well-founded extensive games.

\subsection*{Acknowledgements}
We would like to thank the reviewers and Marcin Dziubi\'{n}ski for
helpful comments. The second author was partially supported by the
grant MTR/2018/001244.

\bibliographystyle{eptcs}
\vspace{-5mm}
\bibliography{ref-s}
\section*{Appendix}

\lmSPE*
\begin{proof}
  \mbox{}
  
\NI ($\Ra$) Suppose $s$ is a subgame perfect equilibrium in $G$.
Consider a non-leaf node $u$ in $T$. Let $i=\turn(u)$, $x = s_i(u)$
and take some $y \in C(u)$. Let $t^u_i$ be the strategy obtained
from $s^u_i$ by assigning the node $y$ to $u$.

We now have
$p_i(\leaf(s^x)) = p_i(\leaf(s^u)) \geq p_i(\leaf(t^u_{i}, s^u_{-i}))
= p_i(\leaf(s^{y}))$, where the inequality holds by since $s^u$ a
Nash equilibrium in $G^u$.

\II

\NI
($\La$)
We proceed by induction on the rank of the game tree of $G$.
For game trees of rank 0 the induction hypothesis is vacuously true.
Suppose the claim holds for all well-founded extensive games
whose game tree is of rank smaller than some ordinal $\alpha >0$.
Consider a well-founded game $G$ over a game tree $T$ of rank
$\alpha$ with the root $v$.

Consider any node $u$ in $T$ such that $u \neq v$. (Since $\alpha >0$,
such a node $u$ exists.) Then $\mathit{rank}(T^u)$ is smaller than
$\alpha$ and for all nodes $w$ in $T^u$ we have $(s^u)^{w} = s^w$. By
the induction hypothesis $s^u$ is a subgame perfect equilibrium in
$G^u$, so a fortiori it is a Nash equilibrium in $G^u$. It remains to
prove that $s$ is a Nash equilibrium in $G$.

Suppose not. Then there exists player $i$ and $t_i \in S_i$ such that
for $t=(t_i,s_{-i})$ we have $p_i(\leaf(s)) < p_i(\leaf(t))$.
Recall that every joint strategy $s'$ in $G$ defines a rooted path
$\mathit{play}(s')$ in $T$. By the definition of $t$ these paths
differ for $s$ and $t$ at a node at which player $i$ moves. So for
some non-leaf node $u$ in $G$ with $\turn(u)=i$ we have
$\mathit{play}(s) = \sigma u x \pi_1$ and
$\mathit{play}(t) = \sigma u y \pi_2$, where $\sigma, \pi_1$ and $\pi_2$
are possibly empty sequences of nodes and $x \neq y$.
So $s_i(u)=x$ and  $t_i(u)=y$.

\NI
\emph{Case 1}. $s^y \neq t^y$.

Take the first, starting from the root, non-leaf node $w$ in $T^y$
such that $s_i(w) \neq t_i(w)$.  We have
$p_i(\leaf(s)) = p_i(\leaf(s^x)) \geq p_i(\leaf(s^y))= p_i(\leaf(s^w))
\geq p_i(\leaf(t^w)) = p_i(\leaf(t))$, where the first inequality
holds by the assumptions for the considered implication for the node
$v$ and the second by the fact that $s^{w}$ is a Nash equilibrium. So
we get a contradiction.
\II

\NI    
\emph{Case 2}. $s^y = t^y$.

We have
$p_i(\leaf(s^x)) = p_i(\leaf(s)) < p_i(\leaf(t)) = p_i(\leaf(t^y)) =
p_i(\leaf(s^y))$. But given that $s_i(u)=x$ this contradicts the
assumption for the node $u$.

This concludes the proof.
\end{proof}





\end{document}